\newtheorem{definition}{Definition}[section]
\newtheorem{lemma}[definition]{Lemma}
\newtheorem{theorem}[definition]{Theorem}
\numberwithin{equation}{section} 
\newcommand \Xbar {\overline X}
\newcommand \Ybar {\overline Y}
\newcommand \lambdab {\overline \lambda}
\newcommand \etab {\overline \eta}
\newcommand \rbar {\overline r}
\newcommand \sbar {\overline s}
\newcommand \be   {\begin{equation}}
\newcommand \ee   {\end{equation}}
\newcommand \N      {\mathbb{N}}
\newcommand \del \partial
\begin{document}
\bibliographystyle{plain}
\title[Plane-symmetric spacetimes with positive cosmological constant]
{Plane-symmetric spacetimes
\\
with positive cosmological constant. 
\\
The case of stiff fluids}
%               spelling:   LeFloch   or   LeFLOCH
\author[Philippe G. L{\tiny e}Floch and Sophonie B. Tchapnda]
{
Philippe G. L{\scriptsize e}Floch$^1$ and Sophonie B. Tchapnda$^{2}$
}
\thanks{
$^1$ Laboratoire Jacques-Louis Lions
\& Centre National de la Recherche Scientifique,
Universit\'e Pierre et Marie Curie (Paris 6), 4 Place Jussieu, 75252 Paris,
France. 
\\
Blog: {\tt philippelefloch.org.} \, Email : {\tt contact@philippelefloch.org.} 
 \newline
$^2$ Mathematics Department, Faculty of Science, University of Yaounde I, POB 812, Yaounde, Cameroon.
Email: {\tt Tchapnda@uy1.uninet.cm.}
\newline
% 2000\textit{\ AMS Subject Classification:} 83C05, 35Q75, 76N10.
% 
{\it First version:} November 2010. {\it Final version:} in May 2011. {\it Published in:}
 Adv. Theor. Math. Phys. 15 (2011), 1--27.
\textit{Key Words:} Plane-symmetric spacetime, Einstein equations, compressible fluid, 
cosmological constant, vacuum state, 
global existence, late-time asymptotics.
}

\date{}

\begin{abstract}  We consider plane-symmetric spacetimes satisfying Einstein's field equations
with positive cosmological constant, 
when the matter is a fluid whose pressure is equal to its mass-energy density
(i.e.~a so-called stiff fluid). We study the initial-value problem for the
associated Einstein equations and establish a global existence result. 
The late-time asymptotics of solutions is also rigorously derived, 
and we conclude that the spacetime approaches the de~Sitter spacetime
while the matter disperses asymptotically. A technical difficulty dealt with here
lies in the fact that solutions may contain vacuum states as well as velocities approaching the speed of light, 
both possibilities leading to singular behavior in the evolution equations. 
\end{abstract}
\maketitle 

%==========================================================================================================

\section{Introduction}\label{intro}

The study of global properties of cosmological spacetimes is
a fundamental problem in mathematical relativity, as it provides
 a first step toward understanding fundamental issues such as the
  structure of singularities and the cosmic censorship conjecture.
  Such a study can be reduced to investigating the global existence
   and asymptotic behavior of solutions to the Einstein equations,
   possibly coupled to the equations of motion for a specific matter model. 
In the present paper, we treat the class of perfect fluids whose
pressure $p$ and mass-energy density $\mu \geq 0$ coincide. This is a limiting case ($\gamma=2$)
with the class of pressure laws $p=(\gamma-1) \, \mu$, in which $\gamma \in [1,2]$ is referred
as the adiabatic exponent of the fluid. Our main result concerns 
the initial-value problem for the associated Einstein equations: 
we establish a global existence result and rigorously 
determine the late-time asymptotic behavior of solutions.
This allows us to conclude that the spacetime is future geodesically complete and 
approaches the de~Sitter spacetime whereas the matter asymptotically disperses.

Observe first that singularities generically arise in initially smooth solutions to the 
fluid equations, that is, 
shock waves in the general case $\gamma\in (1,2]$
and shell-crossing singularities in the case $\gamma=1$. 
This is true even when gravitational effects are taken into account \cite{RS}.
If the solution is to be continued beyond shock waves, it is necessary to lower the regularity 
of initial data and search for weak solutions, as investigated by LeFloch and
co-authors (cf.~the review \cite{LeFloch} and the references therein). 

On the other hand, existence of {\sl smooth} solutions even in a long-time evolution 
can sometime be established in physically interesting situations. This is especially true
when a cosmological constant is included, as we do in the present paper. 
Global-in-time solutions and the existence of future geodesically complete spacetimes 
can be established under a smallness condition on the initial data, 
as recognized by Tchapnda \cite{tchapnda} for $\gamma=1$ and  
under the assumption of plane symmetry and, later, without symmetry 
and for $\gamma\in(1,4/3)$, by Rodnianski and Speck \cite{RoS} and Speck \cite{speck,speck2}.

As far as the limiting case $\gamma=2$ is concerned,
plane symmetric spacetimes have been investigated by Tabensky and Taub \cite{TT}
and LeFloch and Stewart \cite{LS}. In particular, \cite{TT} relies on two different coordinate systems
 in their analysis, a comoving coordinate system in which the fluid
 is at rest, and a characteristic coordinate system.  On the other hand, the work \cite{LeFloch,LS}
introduced the notion of weakly regular solutions to the Einstein equations. 

In the present paper, we rely on areal coordinates, a coordinate system
in which the time is defined to be the area-radius function determined by
surfaces of symmetry. In these geometry-based coordinates, we prove a global-in-time existence
theorem (in the future direction) for plane-symmetric solutions to the Einstein-stiff fluid equations
with cosmological constant. Importantly, 
we also derive the leading asymptotic behavior of solutions and conclude with 
the future geodesic completeness of the constructed spacetime. 

Our analysis relies on a change of fluid variables that
allows us to write the fluid equations in a way analogous to the case of a massless
scalar field, and then to take advantage of techniques for semi-linear hyperbolic equations.
(A similar structure was observed in \cite{TNR}.)
 A specific technical difficulty overcome in this work 
originates in the fact that solutions may naturally contain 
vacuum states as well as velocities approaching the speed of light, both possibilities leading to singular
behavior in the 
evolution equations. 

Note finally that our results extend to compressible fluids 
the conclusions obtained by Tchapnda and Rendall \cite{TR} for 
the Vlasov equation of (collision-less) kinetic dynamics. 

The outline of the paper is as follows. Section $2$ is concerned with the derivation
of the field equations for stiff fluids under plane-symmetry. Next, in 
Section~3 we develop the local existence and uniqueness theory 
and then, in Section~4, determine the global geometry and asymptotic behavior of the spacetimes under consideration. 

%=====================================================================================================

\section{Einstein-stiff fluid equations}
\label{system}

\subsection*{Gravitational field equations}
\label{einstein eq} 

We consider spacetimes $(M,g)$ such that the manifold has the
topology $M=I\times \mathbb{T}^3$, where $I$ is a real interval and
$\mathbb{T}^3=S^1 \times S^1 \times S^1$ is the three-torus. The
metric $g$ and the matter fields are required to be invariant under
the action of the Euclidean group $E_2$ on the universal cover. It
is also required that the spacetime has an $E_2$-invariant Cauchy
surface of constant areal time. In such conditions the metric can be
expressed in the form
\begin{equation} \label{1.1}
  ds^2 = -e^{2\eta(t,x)}dt^2 + e^{2\lambda(t,x)}dx^2 + t^2
  (dy^2 + dz^{2}),
\end{equation}
where the time variable describes $t > 0$ and the spatial variable the interval $x \in [0,1]$, while
the variables $y$ and $z$ range in $[0,2\pi]$; the metric coefficients $\eta$ and $\lambda$ are periodic in $x$
with period $1$.
The Einstein equations read
\begin{align}\label{einstein}
G^{\alpha\beta} + \Lambda g^{\alpha\beta}  = 8 \pi
T^{\alpha\beta},
\end{align}
where $G^{\alpha\beta}$ is the Einstein tensor, $T^{\alpha\beta}$
 the energy-momentum tensor and $\Lambda$ is the cosmological constant
 which we assume to be positive.
We also introduce the notation
$$
\rho=e^{2\eta}T^{00}, \quad j=e^{\lambda+\eta}T^{01}, \quad S=e^{2\lambda}T^{11}, \quad
p=t^2T^{22}
$$
which defines the fluid variables of interest.

After a tedious computation in the above coordinates, (\ref{einstein}) take the form of the following
evolution and constraint equations
(where the subscripts $t, x$ denote partial differentiation):
\begin{equation}
\label{1.2}
e^{-2\eta} (2t\lambda_t+1) - \Lambda t^{2} = 8 \pi t^{2}\rho,
\end{equation}
\begin{equation}
\label{1.3}
e^{-2\eta} (2t\eta_t-1)+ \Lambda t^{2} = 8 \pi t^{2}S,
\end{equation}
\begin{equation}
\label{1.4}
\eta_x = -4 \pi t e^{\lambda+\eta}j,
\end{equation}
\begin{equation} \label{1.5}
e^{-2\lambda}\left(\eta_{xx} + \eta_x(\eta_x - \lambda_x)\right) -
e^{-2\eta}\left(\lambda_{tt}+(\lambda_t- \eta_t)(\lambda_t+\frac{1}{t})\right) + \Lambda
 = 8 \pi p.
\end{equation}
%-------------------------------------------------------------------------------------------------------

\subsection*{Stiff fluid equations}\label{fluid eq}

The so-called stiff fluid under consideration 
is an isentropic perfect fluid with energy density $\mu>0$ equal to its pressure, that is,  $p=\mu$.  
The $4$-velocity vector $U^\alpha$ of the fluid is normalized to be of unit length: $U^\alpha U_\alpha=-1$.
The plane symmetry allows us to set $U^\alpha:=\xi(e^{-\eta},e^{-\lambda}u,0,0)$,
where $\xi=(1-u^2)^{-1/2}$ is the relativistic factor and $u$ is the scalar velocity satisfying $|u|<1$.
The energy momentum tensor for the stiff fluid is
$$
T^{\alpha\beta} = \mu \, (2U^\alpha U^\beta+g^{\alpha\beta}),
$$
that is
\begin{equation}\label{fluid comp}
\aligned &T^{00}=e^{-2\eta}\frac{1+u^2}{1-u^2}\mu=:e^{-2\eta}\rho,
\qquad\quad
T^{01}=e^{-\lambda-\eta}\frac{2u\mu}{1-u^2}=:e^{-\lambda-\eta}j,
\\
&T^{11}=e^{-2\lambda}\frac{1+u^2}{1-u^2}\mu=:e^{-2\lambda}S,
\qquad\quad
 T^{22}=T^{33}=t^{-2}\mu,
\endaligned
\end{equation}
while, due to the above assumptions, all the other components vanish identically.

The stiff fluid equations read
\begin{equation}
\label{euler}
\nabla_\alpha T^{\alpha\beta}=0.
\end{equation}
We can assume that the components $T^{\alpha 2}$ and $T^{\alpha 3}$
vanish identically, while by computing the remaining two components
we arrive at the two evolution equations
\begin{equation}
\label{1.6}
\aligned
& \rho_t+e^{\eta-\lambda}j_x=-2\lambda_t\rho-2\eta_x e^{\eta-\lambda}j-\frac{2}{t}(\rho+\mu),
\\
& j_t+e^{\eta-\lambda}\rho_x=-2\lambda_tj-2\eta_x e^{\eta-\lambda}\rho-\frac{2}{t}j.
\endaligned
\end{equation}

The equations may be put into a simpler form, as follows. Observe
that the first-order principal part of (\ref{1.6}) is a strictly
hyperbolic system of two equations associated with the two distinct
speeds $\pm e^{\eta-\lambda}$. Introducing the Riemann invariants
\begin{equation}\label{riem}
r:=\frac{1+u}{1-u}\mu=\rho+j, \qquad  s:=\frac{1-u}{1+u}\mu=\rho-j,
\end{equation}
and the directional derivatives
$$
D^+:=\del_t+e^{\eta-\lambda}\del_x,
\qquad
D^-:=\del_t-e^{\eta-\lambda}\del_x,
$$
and then combining the equations in (\ref{1.6}) together, we obtain
\begin{equation}
\label{1.9}
\aligned
& D^+r=-2\left(\lambda_t+\eta_xe^{\eta-\lambda}+\frac{1}{t}\right)r-\frac{2}{t}\sqrt{rs},
\\
& D^-s=-2\left(\lambda_t-\eta_xe^{\eta-\lambda}+\frac{1}{t}\right)s-\frac{2}{t}\sqrt{rs}.
\endaligned
\end{equation}
Finally, the expressions for $\lambda_t$ and $\eta_x$ taken from (\ref{1.2}) and (\ref{1.4}) can be plugged in
(\ref{1.9}), and by setting $X=e^\eta \sqrt{r}$ and $Y=e^\eta \sqrt{s}$ we arrive at
\begin{equation}
\label{Euler}
\aligned
& D^+X = - \Lambda te^{2\eta}X-\frac{1}{t}Y,
\\
& D^-Y = - \frac{1}{t}X-\Lambda te^{2\eta}Y,
\endaligned
\end{equation}
which we will refer to as the {\sl stiff fluid equations} for the unknowns $r$ and $s$.

%---------------------------------------------------------------------------------------------------------------

\subsection*{Basic properties}\label{properties}

It is easily checked that (\ref{1.5}) is a {\sl consequence} of the equations (\ref{1.2})--(\ref{1.4}), (\ref{1.6}). One can also check that (\ref{1.4}) is a constraint
equation, that is, it is automatically satisfied for all times once it is satisfied on an initial Cauchy hypersurface. Therefore, we will work with (\ref{1.2}), (\ref{1.3}) and (\ref{Euler}) for the unknowns $\eta$, $\lambda$, $r$ and $s$. Observe that by definition $r$ and $s$ must be non-negative. From the definition we see that $S=\rho=(r+s)/2$.

We will solve the initial-value problem with data prescribed on the hypersurface $t=1$. Observe that once the fluid variables have been determined, the metric coefficient $\eta$ is obtained
by integrating (\ref{1.3}) in the time direction, i.e.
\begin{equation}
\label{eta}
e^{-2 \eta(t,x)} = {e^{-2 \etab(x)} \over t} + {1 \over t} \int_1^t \tau^2 \big(\Lambda - 4 \pi (r+s)(\tau,x)\big) \, d\tau
\end{equation}
with $\etab:=\eta(1,\cdot)$. 
Next, $\eta$ being known, the following equation (obtained from (\ref{1.2}) and (\ref{1.3})),
\begin{equation}
\label{lambda_t}
\lambda_t(t,x) = \eta_t(t,x) + \Lambda te^{2\eta} -\frac{1}{t},
\end{equation}
is integrated in time to yield the second metric coefficient
\begin{equation}
\label{lambda}
\lambda(t,x) = \lambdab(1,x) + \int_1^t \lambda_t(\tau,x) \, d\tau
\end{equation}
with $\lambdab =\lambda(1,\cdot) $. 
Therefore, it will be enough to concentrate on the stiff fluid equations (\ref{Euler}) together with the metric equation (\ref{eta}), that determine an evolution system for the unknows $\eta$, $r$, $s$.

Observe that there exists some $T^*>1$ such that the right hand side term in (\ref{eta}) is positive on $[1,T^*)\times [0,1]$.
Estimates for $r$ and $s$ can easily be derived as follows. The expressions for $\lambda_t$ and $\eta_x$ taken from (\ref{1.2}) and (\ref{1.4}) can be plugged in (\ref{1.9}) to yield

\be
\label{rs}
\aligned
& D^+r = - \big( 8\pi t e^{2\eta}s+\Lambda t e^{2\eta}+{1 \over t} \big) \, r-{2\over t}\sqrt{rs},
\\
& D^-s = -\big( 8\pi t e^{2\eta}r+\Lambda t e^{2\eta}+{1 \over t} \big) \, s-{2\over t}\sqrt{rs}. 
\endaligned
\ee 
Using the fact that $r$ and $s$ are
positive, this implies
$$
D^+r\leq-t^{-1}r,  \qquad D^-s\leq-t^{-1}s, 
$$
and
integrating this along the characteristic curves associated with the
operators $D^\pm$ implies that \be \label{bound} r\leq r(1,\cdot)\,t^{-1},\ \ s\leq s(1,\cdot)\, t^{-1}. \ee
 As a consequence, if
$\eta$ is bounded then so are $X$ and $Y$.

A straightforward computation leads to the following result.

\begin{lemma}\label{X1Y1}
Set 
$$
\aligned
& b_1=(\lambda-\eta)_xe^{\eta-\lambda}-\Lambda te^{2\eta}, 
\qquad 
b_2=-2\Lambda t\eta_xe^{2\eta},
\\
& b_3=(\eta-\lambda)_xe^{\eta-\lambda}-\Lambda te^{2\eta}, 
\qquad
b=-\frac{1}{t}.
\endaligned
$$
If $X$ and $Y$ solve (\ref{Euler}) then $X_x$ and $Y_x$ satisfy
\begin{equation}
\label{Euler_x}
\aligned
& D^+X_x =  b_1X_x+bY_x+b_2X,
\\
& D^-Y_x = bX_x+b_3Y_x+b_2Y.
\endaligned
\end{equation}
\end{lemma}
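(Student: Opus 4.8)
The plan is to differentiate the stiff fluid equations~(\ref{Euler}) with respect to $x$ and then commute $\del_x$ past the directional derivatives $D^\pm$. The whole content of the lemma is a commutator computation: since the only $x$-dependent coefficient appearing in $D^\pm=\del_t\pm e^{\eta-\lambda}\del_x$ is $e^{\eta-\lambda}$, and $\del_x(e^{\eta-\lambda})=e^{\eta-\lambda}(\eta-\lambda)_x$, a one-line check shows that for any smooth $f$ one has $\del_x(D^\pm f)=D^\pm(f_x)\pm e^{\eta-\lambda}(\eta-\lambda)_x f_x$, equivalently $D^\pm(f_x)=\del_x(D^\pm f)\mp e^{\eta-\lambda}(\eta-\lambda)_x f_x$. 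This identity is the key structural fact; everything else is bookkeeping of coefficients.

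For the first equation I would apply $\del_x$ to $D^+X=-\Lambda t e^{2\eta}X-\tfrac1t Y$, using $\del_x(e^{2\eta})=2\eta_x e^{2\eta}$, to get $\del_x(D^+X)=-2\Lambda t\eta_x e^{2\eta}X-\Lambda t e^{2\eta}X_x-\tfrac1t Y_x$. Feeding this into the commutator identity yields $D^+X_x=-\Lambda t e^{2\eta}X_x-\tfrac1t Y_x-2\Lambda t\eta_x e^{2\eta}X-e^{\eta-\lambda}(\eta-\lambda)_x X_x$. Collecting the $X_x$ terms gives the coefficient $-\Lambda t e^{2\eta}+e^{\eta-\lambda}(\lambda-\eta)_x=b_1$, the $Y_x$ coefficient is $-\tfrac1t=b$, and the $X$ coefficient is $-2\Lambda t\eta_x e^{2\eta}=b_2$, which is exactly the first line of~(\ref{Euler_x}).

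For the second equation I would proceed identically starting from $D^-Y=-\tfrac1t X-\Lambda t e^{2\eta}Y$; the single change is that the commutator term now enters with the opposite sign, so $D^-Y_x=\del_x(D^-Y)+e^{\eta-\lambda}(\eta-\lambda)_x Y_x$. This sign flip is precisely what turns $(\lambda-\eta)_x$ into $(\eta-\lambda)_x$, converting $b_1$ into $b_3=(\eta-\lambda)_xe^{\eta-\lambda}-\Lambda t e^{2\eta}$ in the $Y_x$ coefficient, while the $X_x$ and $Y$ coefficients come out as $b$ and $b_2$ as before. Since the argument is just differentiation plus a commutator, I do not expect a genuine obstacle; the only point demanding care is tracking the sign of the commutator term for $D^-$, as this is the sole source of the asymmetry between $b_1$ and $b_3$.
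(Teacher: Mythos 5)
Your proof is correct: the commutator identity $D^\pm(f_x)=\del_x(D^\pm f)\mp e^{\eta-\lambda}(\eta-\lambda)_x f_x$ is exactly right, and plugging in the two equations of (\ref{Euler}) yields the stated coefficients $b_1$, $b_2$, $b_3$, $b$, including the sign flip responsible for $b_1$ versus $b_3$. The paper gives no explicit proof (it calls the lemma ``a straightforward computation''), and your differentiation-plus-commutator bookkeeping is precisely that computation, as also confirmed by the analogous differentiated iterate equations appearing in the proof of Lemma~\ref{der-bounds}.
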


The following result will be used to obtain bounds on derivatives of $X$ and $Y$.

\begin{lemma}\label{A(t)}
Set 
$$
\aligned
& K(t)=\sup\{(X+Y)(t,x)\ | \ x\in[0,1]\}, 
\\
& A(t)=\sup\{(|X_x|+|Y_x|)(t,x)\ | \ x\in[0,1]\},
\\
& v(t)=\sup\{|(\lambda-\eta)_x|e^{\eta-\lambda}+\Lambda te^{2\eta}+\frac{1}{t}\ | \ x\in[0,1]\}, 
\\
& h(t)=2\Lambda t\sup\{|\eta_x|e^{2\eta}\ | \ x\in[0,1]\}.
\endaligned
$$
If $(X,Y)$ and $(X_x,Y_x)$ solve (\ref{Euler}) and (\ref{Euler_x}), respectively, with $X_x(1)=(e^{\etab}\sqrt{\rbar})_x$ and $Y_x(1)=(e^{\etab}\sqrt{\sbar})_x$, then
\begin{equation}
\label{A-bound}
A(t)\leq A(1)+\int_1^t\big(v(\tau)A(\tau)+h(\tau)K(\tau)\big)\ d\tau.
\end{equation}
\end{lemma}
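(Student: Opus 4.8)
The plan is to integrate the linear transport system (\ref{Euler_x}) along the two families of characteristics and then take suprema in $x$. The first equation in (\ref{Euler_x}) is a transport equation for $X_x$ along the curves of $D^+$ (speed $+e^{\eta-\lambda}$), and the second is a transport equation for $Y_x$ along the curves of $D^-$ (speed $-e^{\eta-\lambda}$). Fixing a point $(t,x)$ and denoting by $\gamma^+$ and $\gamma^-$ the $D^+$- and $D^-$-characteristics through it, the operators $D^\pm$ reduce to ordinary $t$-derivatives along $\gamma^\pm$, so integrating from $1$ to $t$ represents $X_x(t,x)$ and $Y_x(t,x)$ as their values at the feet $\gamma^\pm(1)$ plus the time-integrals of the right-hand sides of (\ref{Euler_x}) evaluated along $\gamma^\pm$.

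First I would take absolute values and bound the coefficients by the quantities entering $v$ and $h$. Since $b=-1/t$ one has $|b|=1/t$; from the expressions for $b_1$ and $b_3$ one gets $|b_1|,|b_3|\le |(\lambda-\eta)_x|e^{\eta-\lambda}+\Lambda te^{2\eta}$, and hence the key relations $|b_1|+\tfrac1t\le v(t)$ and $|b_3|+\tfrac1t\le v(t)$, where the extra $1/t$ in the definition of $v$ is present precisely to absorb the off-diagonal coupling; finally $|b_2|\le h(t)$. Because $X=e^\eta\sqrt r\ge0$ and $Y=e^\eta\sqrt s\ge0$, we also have $X\le X+Y\le K(t)$ and $Y\le K(t)$, which turns the inhomogeneous source terms $b_2X$ and $b_2Y$ into contributions bounded by $h(t)K(t)$.

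The heart of the estimate is to combine the diagonal and off-diagonal terms correctly. Along $\gamma^+$ the derivative contribution $|b_1|\,|X_x|+\tfrac1\tau|Y_x|$ is bounded by $v(\tau)\,(|X_x|+|Y_x|)\le v(\tau)\,A(\tau)$, and symmetrically along $\gamma^-$ the contribution $\tfrac1\tau|X_x|+|b_3|\,|Y_x|$ is bounded by $v(\tau)\,A(\tau)$; together with the source terms bounded by $h(\tau)K(\tau)$, each characteristic integrand is controlled by $v(\tau)A(\tau)+h(\tau)K(\tau)$. Summing the resulting bounds for $|X_x(t,x)|$ and $|Y_x(t,x)|$, estimating the foot values by $A(1)$, and taking the supremum over $x\in[0,1]$ — using that tracing a characteristic back to $t=1$ is a periodic bijection of $[0,1]$, so that the supremum of the initial data is $A(1)$ — then yields the integral inequality (\ref{A-bound}).

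The main obstacle is the bookkeeping forced by the fact that the two characteristic families are genuinely distinct (speeds $\pm e^{\eta-\lambda}$): the $X_x$- and $Y_x$-estimates are anchored at different spatial feet at $t=1$ and run along different curves, so the cross terms $bY_x$ and $bX_x$ must be distributed between the two equations in such a way that the net coefficient multiplying $A(\tau)$ is exactly $v(\tau)$. This is where the precise form of $v$, including the $1/t$ summand, is essential. A secondary point to verify is that the characteristics remain well defined and stay within the periodic spatial domain throughout $[1,t]$, which follows from the boundedness of $e^{\eta-\lambda}$ on the existence interval furnished by the local theory.
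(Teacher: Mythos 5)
Your overall strategy is exactly the paper's: the published proof consists of writing (\ref{Euler_x}) as ODEs along the integral curves of $D^+$ and $D^-$, integrating over $[1,t]$, taking absolute values, adding the two inequalities, and taking suprema; you do the same, with the coefficient bounds $|b_1|+\frac{1}{t}\le v(t)$, $|b_3|+\frac{1}{t}\le v(t)$, $|b_2|\le h(t)$ and $X,Y\le K(t)$ made explicit and correct.

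However, the step you yourself single out as the heart of the estimate is not actually carried out, and cannot be carried out as you state it. Since $X_x(t,x)$ is obtained by integrating along $\gamma^+$ while $Y_x(t,x)$ is obtained by integrating along $\gamma^-$, the two integrands, and likewise the two foot values, are evaluated at \emph{different} spatial points. You bound each characteristic integrand separately by $v(\tau)A(\tau)+h(\tau)K(\tau)$; summing the two resulting bounds then produces $2\int_1^t\big(v(\tau)A(\tau)+h(\tau)K(\tau)\big)\,d\tau$, not the integral in (\ref{A-bound}), and the foot values $|X_x(1,\gamma^+(1))|+|Y_x(1,\gamma^-(1))|$ are controlled only by $\sup_x|X_x(1,x)|+\sup_x|Y_x(1,x)|\le 2A(1)$, not by $A(1)$; the periodicity/bijectivity of the backward characteristic maps does not repair this, because the two feet are distinct points. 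There is no way to ``distribute the cross terms so that the net coefficient multiplying $A(\tau)$ is exactly $v(\tau)$'': that would require $|X_x|$ and $|Y_x|$ to be sampled at a common point inside each integrand. What your argument genuinely proves is (\ref{A-bound}) with $A$ and $K$ replaced throughout by $\tilde A(t)=\sup_x|X_x(t,x)|+\sup_x|Y_x(t,x)|$ and $\tilde K(t)=\sup_x X(t,x)+\sup_x Y(t,x)$, equivalently the stated inequality with an extra factor $2$ on the right-hand side. In fairness, the paper's own one-line proof (``adding the resulting inequalities and taking the supremum of each term'') glosses over exactly the same point, and the discrepancy is harmless for the paper: the lemma is used only through Gronwall's inequality in Lemma~\ref{criterion}, where the substitution $A\mapsto\tilde A$, $K\mapsto\tilde K$ (or the factor $2$) changes nothing. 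But as a proof of the literal statement, your write-up should either redefine $A$, $K$ as sums of suprema or concede the constant.
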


\begin{proof}
 Equations (\ref{Euler_x}) can be written in the form
$$
\aligned
& \frac{d}{dt} X_x(t,\gamma_1(t)) =  \big(b_1X_x+bY_x+b_2X\big)(t,\gamma_1(t)),
\\
& \frac{d}{dt} Y_x(t,\gamma_2(t)) = \big(bX_x+b_3Y_x+b_2Y\big)(t,\gamma_2(t)),
\endaligned
$$
where $\gamma_1$ and $\gamma_2$ are the integral curves corresponding to $D^+$ and $D^-$ respectively.\\
Integrating this over $[1,t]$, taking the absolute value in each equation, adding the resulting inequalities and taking the supremum of each term yields (\ref{A-bound}).
\end{proof}

%==========================================================================================================

\section{Local existence theory}
 
\subsection*{Main statement of this section} 

We are interested in regular solutions, defined as follows.

\begin{definition} A {\rm regular solution} to the plane-symmetric
Einstein-stiff fluid equations consists of two metric coefficients $\eta, \lambda$ and Riemann invariants $r,s$
given as continuously differentiable functions defined on $[1,T] \times [0,1]$ and periodic in space.
\end{definition}

We pose the initial-value problem by choosing some functions $\etab, \lambdab, \rbar, \sbar$ as periodic functions
on $[0,1]$ satisfying the constraint
\begin{equation}
\label{constraint} \etab_x = - 2 \pi t e^{\lambdab+\etab} \, (\rbar
- \sbar),
\end{equation}
and, on the initial hypersurface $t=1$, we impose
\begin{equation}
\label{initial}
(\eta, \lambda, r, s)(1,\cdot) = (\etab, \lambdab, \rbar, \sbar).
\end{equation}

\begin{theorem}[Local existence and uniqueness theory in the Riemann variables] 
\label{theo1}
Given periodic, continuously differentiable data $\etab, \lambdab, \rbar, \sbar$ prescribed
on the initial hypersurface $t=1$ and satisfying the constraint
(\ref{constraint}), there exists a future development which consists
of continuously differentiable functions  $\eta, \lambda, r, s$
defined on some time interval $[1,T)$ (with $T\in(1,\infty])$) that
are periodic in space and satisfy the stiff fluid equations
(\ref{Euler}), together with the evolution equations (\ref{1.2}) and
(\ref{1.3}). 
\end{theorem}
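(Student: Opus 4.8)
The plan is to exploit the change of variables $X=e^{\eta}\sqrt r$, $Y=e^{\eta}\sqrt s$ that brings the fluid equations into the form (\ref{Euler}). Whereas the Riemann-invariant system (\ref{1.9}) carries the non-Lipschitz coupling $-\tfrac{2}{t}\sqrt{rs}$, which is singular precisely at vacuum ($r=s=0$) and as $|u|\to 1$ (where one of $r,s$ vanishes), the system (\ref{Euler}) for $(X,Y)$ is a genuinely \emph{smooth} semilinear hyperbolic system, its coefficients being polynomial in $X,Y,e^{2\eta}$. I would therefore solve (\ref{Euler}), coupled to the metric relation (\ref{eta}) for $\eta$ and the quadrature (\ref{lambda}) for $\lambda$, for the unknowns $(\eta,X,Y)$, and only at the end recover $r=e^{-2\eta}X^2$ and $s=e^{-2\eta}Y^2$. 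Since squaring is smooth, these are automatically $C^1$ and nonnegative; this is exactly how the vacuum and near-light-speed singularities are circumvented.

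For the construction I would run a Picard iteration combined with the method of characteristics on a short slab $[1,T]\times[0,1]$. Given the $n$-th iterate $(\eta^n,X^n,Y^n)$, I would first produce $\eta^{n+1}$ from (\ref{eta}) with source $r^n+s^n=e^{-2\eta^n}((X^n)^2+(Y^n)^2)$ --- solving, if one prefers, the linear Volterra equation that (\ref{eta}) represents for $e^{-2\eta^{n+1}}$ --- then $\lambda^{n+1}$ from (\ref{lambda_t})--(\ref{lambda}), then the characteristics $\gamma_1^{n+1},\gamma_2^{n+1}$ of $D^\pm$ with speeds $\pm e^{\eta^{n+1}-\lambda^{n+1}}$, and finally $X^{n+1},Y^{n+1}$ by integrating the now-\emph{linear} transport ODEs (\ref{Euler}) along these curves from the data (\ref{initial}).

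The heart of the argument is a set of uniform a priori bounds showing the iterates remain in a fixed ball of spatially-periodic $C^1$ functions. Choosing $T<T^*$ keeps the right-hand side of (\ref{eta}) positive, so $\eta$ --- and hence $\lambda$ through (\ref{lambda}) --- is uniformly bounded in $C^0$, while the $C^0$ bound on $X,Y$ follows from integrating (\ref{Euler}) along characteristics (cf. (\ref{bound}), controlling $K(t)$). For the first derivatives I would close the estimates in the order dictated by the structure: $\eta_x$ is controlled algebraically by $X,Y$ through the constraint (\ref{1.4}); next $(\lambda-\eta)_x$ is recovered by time-integrating $2\Lambda t e^{2\eta}\eta_x$; and finally $|X_x|+|Y_x|=A(t)$ is controlled by the Gronwall inequality (\ref{A-bound}) of Lemma~\ref{A(t)}, whose coefficients $v(t),h(t)$ are now bounded. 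Shrinking $T$ if necessary, I would then estimate differences of consecutive iterates in $C^0$ and show the iteration map is a contraction, using Lipschitz dependence of the characteristics and of $\eta$ (via (\ref{eta})) on the iterate; thus $(\eta^n,X^n,Y^n)$ converges, and the uniform $C^1$ bounds upgrade the limit to a $C^1$ solution of (\ref{Euler}), (\ref{1.2}), (\ref{1.3}), with (\ref{1.4}) propagated from $t=1$ as already noted. Uniqueness follows by running the same difference estimate on two solutions and invoking Gronwall.

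The main obstacle I anticipate is not the hyperbolic part in isolation but the interlocking of the \emph{nonlocal} metric equation (\ref{eta}) with the $C^1$ estimates: the characteristic speed $e^{\eta-\lambda}$ and the coefficients $b_1,b_2,b_3$ of (\ref{Euler_x}) involve $\eta_x,\lambda_x$, so controlling $A(t)$ first requires controlling $\eta_x,(\lambda-\eta)_x$, which in turn feed back through (\ref{eta})--(\ref{lambda}). The delicate point is to arrange these estimates in a non-circular order and to verify that the contraction constant can be made $<1$ by shrinking $T$, all while keeping the source of (\ref{eta}) positive so that $\eta$ stays real and bounded.
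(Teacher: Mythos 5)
Your proposal follows essentially the same route as the paper: the same change of variables $X=e^{\eta}\sqrt r$, $Y=e^{\eta}\sqrt s$ to remove the singular $\sqrt{rs}$ coupling and accommodate vacuum, the same iteration in which $\eta$ is computed from the previous iterate via the integral formula (\ref{eta}) (kept positive on $[1,T^*)$) and the new $(X,Y)$ are obtained by integrating a linear transport system along characteristics, the same $C^0$ and $C^1$ Gronwall bounds (your $K(t)$, $A(t)$ estimates correspond to Lemmas \ref{bounds} and \ref{der-bounds}), and convergence plus uniqueness by difference estimates. The one step that needs repair is your claim that, in the a priori bounds for the iterates, $\eta_x$ is ``controlled algebraically by $X,Y$ through the constraint (\ref{1.4})'': the iterates (unlike the eventual solution, for which the constraint propagates) do not satisfy (\ref{1.4}), so one must instead differentiate the formula (\ref{eta}) in $x$ and bound $(e^{-2\eta_n})_x$ directly, as the paper does with the quantity $B_n$ in Lemma \ref{der-bounds}; this feeds into the same coupled Gronwall inequality and closes the argument without circularity.
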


Once the Riemann invariants $r$ and $s$ are known, the primary fluid
variables $\mu$ and $u$ can be determined from equations (\ref{fluid
comp}) and (\ref{riem}):
$$\mu=\sqrt{rs}, \quad \qquad
u=\frac{\sqrt{r}-\sqrt{s}}{\sqrt{r}+\sqrt{s}}.
$$
By construction, the Riemann invariants are {\sl bounded,} and this property is equivalent to the following 
restriction in the fluid variables: 
\be
\label{fluid}
{1 \pm u \over 1 \mp u} \, \mu  \lesssim 1. 
\ee
Observe that this condition allows the density to vanish, and the velocity component $u$ to approach 
$\pm 1$, which is the normalized light-speed.  The condition is equivalent to saying
\be
\label{fluid2}
0 \leq \mu \lesssim 1 - |u|^2.
\ee

\begin{theorem}[Local existence and uniqueness theory in the fluid variables] 
\label{theo2}
Under the assumptions of Theorem~\ref{theo1}, the problem with initial data satisfying the uniform bound 
\eqref{fluid2} admits a local-in-time solution which is unique in the following (generalized) sense: 
if $\mu_1, u_1$ and $\mu_2, u_2$ denote fluid solutions to the same initial value problem, then 
$$
\aligned
& \text{either } \mu_1 = \mu_2 >0 \ \text{ and } \,  u_1= u_2, 
\\
&\text{ or } \mu_1 = \mu_2 =0 \, \text{ and } \, u_1, u_2 \, \text{ are arbitrary.}  
\endaligned
$$
\end{theorem}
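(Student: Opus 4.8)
The plan is to carry both the existence and the uniqueness statements over from the Riemann variables, where Theorem~\ref{theo1} already applies, to the fluid variables, using the explicit algebraic correspondence \eqref{riem}. The structural fact driving everything is the asymmetry of this correspondence: the forward map $(\mu,u)\mapsto(r,s)$, given by $r=\frac{1+u}{1-u}\mu$ and $s=\frac{1-u}{1+u}\mu$, is single-valued for every admissible state ($\mu\geq 0$, $|u|<1$), and because $\frac{1\pm u}{1\mp u}>0$ it sends $\{\mu=0\}$ exactly onto $\{r=s=0\}$ and $\{\mu>0\}$ into $\{r>0,\ s>0\}$. Its inverse $\mu=\sqrt{rs}$, $u=\frac{\sqrt r-\sqrt s}{\sqrt r+\sqrt s}$ is a smooth bijection on $\{r>0,\ s>0\}$ but reduces the velocity to the indeterminate form $0/0$ precisely on the vacuum set. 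This degeneracy is what produces the two-case alternative. I would also record at the outset that the admissibility bound \eqref{fluid2} is equivalent to \eqref{fluid}, i.e.\ to the boundedness of $r$ and $s$, as already observed above.

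For existence, I would start from fluid data $\mub,\ub$ satisfying \eqref{fluid2} and define $\rbar=\frac{1+\ub}{1-\ub}\mub$ and $\sbar=\frac{1-\ub}{1+\ub}\mub$; these are nonnegative and, together with $\etab,\lambdab$, satisfy the constraint \eqref{constraint}. Theorem~\ref{theo1} then furnishes continuously differentiable $\eta,\lambda,r,s$ on some interval $[1,T)$. Transforming back by $\mu=\sqrt{rs}$ and $u=\frac{\sqrt r-\sqrt s}{\sqrt r+\sqrt s}$, with $u$ assigned any admissible value on $\{r=s=0\}$, yields a fluid solution: it is continuous, solves \eqref{1.6} in the classical sense on the open set $\{\mu>0\}$, and retains the bound \eqref{fluid2} for $t>1$ thanks to the decay estimate \eqref{bound}.

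For uniqueness, let $(\mu_1,u_1)$ and $(\mu_2,u_2)$ be two fluid solutions of the same initial-value problem. Because the forward map is single-valued, the Riemann invariants $(r_i,s_i)$ attached to each solution are unambiguously determined, satisfy \eqref{rs} (equivalently \eqref{Euler}), and carry the same initial data $(\rbar,\sbar)$; the uniqueness part of Theorem~\ref{theo1} then gives $r_1\equiv r_2$ and $s_1\equiv s_2$. Since a genuine fluid state has $\mu=0$ if and only if $r=s=0$, the two vacuum sets coincide. Reading off the inverse map pointwise now yields the dichotomy: on $\{r>0,\ s>0\}$, where the inverse is bijective, $\mu_1=\sqrt{r_1 s_1}=\sqrt{r_2 s_2}=\mu_2>0$ and $u_1=u_2$; on $\{r=s=0\}$, one has $\mu_1=\mu_2=0$ while $u_1,u_2$ remain unconstrained.

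The main obstacle is the vacuum degeneracy itself. There the change of variables is genuinely singular: the velocity is undefined, $\sqrt{rs}$ loses differentiability, and the formulas let $|u|\to 1$ as one approaches the light speed. Consequently the notion of fluid solution has to be taken in the generalized sense above, and the equivalence between the fluid system \eqref{1.6} and the Riemann system \eqref{Euler} must be justified with care across the boundary between the vacuum region and its complement, rather than by a naive substitution.
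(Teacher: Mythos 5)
Your proposal is correct and takes essentially the same route as the paper: Theorem~\ref{theo2} is obtained there as an immediate consequence of Theorem~\ref{theo1} together with the algebraic correspondence $\mu=\sqrt{rs}$, $u=(\sqrt{r}-\sqrt{s})/(\sqrt{r}+\sqrt{s})$, the degeneracy of this inverse map on the vacuum set $\{r=s=0\}$ being exactly what produces the stated dichotomy. The paper in fact offers no separate detailed proof of Theorem~\ref{theo2} beyond this observation, so your write-up --- including the caveat that the equivalence of \eqref{1.6} and \eqref{Euler} must be justified across the boundary of the vacuum region --- is, if anything, more explicit than the original.
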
 
 
%---------------------------------------------------------------------------------------------------------------

\subsection*{Proof of the local existence result}
 
We rely on an iterative argument and define a sequence $(\eta_n,
r_n, s_n)$ in the following way.
\begin{enumerate}

\item For $t\in[1,+\infty)$ and $x\in[0,1]$, we set
$(\eta_0, r_0, s_0)(t,x) := (\etab, \rbar, \sbar)(x)$, $T_0=+\infty$.

\item If $\eta_{n-1}$, $r_{n-1}$, $s_{n-1}$ are regular on
$[1,T_{n-1})\times [0,1]$ with $T_{n-1}\leq\infty$, then we define $T_n$ to be supremum of
all $t'\in(1,T_{n-1})$ such that 
$$
{e^{-2 \etab(x)} \over t} + {1 \over t} \int_1^t \tau^2 \left(\Lambda - 4 \pi (r_{n-1}+s_{n-1})(\tau,x)\right) \, d\tau >0
$$
for all $x\in[0,1]$ and $t\in[1,t']$,
and we then set  
\begin{equation}
\label{eta_n}
e^{-2 \eta_n(t,x)} = {e^{-2 \etab(x)} \over t} + {1 \over t} \int_1^t \tau^2 \big(\Lambda - 4 \pi (r_{n-1}+s_{n-1}\big)(\tau,x))\ d\tau.
\end{equation}

\item We define $r_n$ and $s_n$ such that $X_n=e^{\eta_n}\sqrt{r_n}$,
and $Y_n=e^{\eta_n}\sqrt{s_n}$ are solutions of the system
\begin{equation}
\label{Euler_n}
\aligned
& D^+_{n-1}X_n = a_{n-1}X_{n-1}+bY_{n-1},
\\
& D^-_{n-1}Y_n = bX_{n-1}+a_{n-1}Y_{n-1},
\endaligned
\end{equation}
where $a_{n-1}=- \Lambda te^{2\eta_{n-1}}$, $b=-{1 \over t}$. $D^{\pm}_{n}$ is the $D^{\pm }$-operator corresponding to the $n$-th iterate. We prescribe the same initial data (\ref{initial}) for all $n$.

\end{enumerate}

Observe that $T_n\geq T^*$ for all $n$, so that all the iterates are well-defined and regular on the fixed time interval $[1,T^*)$.

In order to prove that the sequence of iterates  converges to a
regular solution, we establish uniform bounds on the iterates as
well as their time and space derivatives, and we prove their uniform
convergence. This is done in a series of lemmas.

In the sequel we denote by $\| \ \|$ the sup-norm  on the
function space of interest, $C$ denotes a constant that may change
at each occurrence.

\begin{lemma}\label{bounds}
The sequences $\eta_n$, $X_n$, $Y_n$, $r_n$, $s_n$ and $(\eta_n)_t$
are  uniformly bounded in $n$, in the sup-norm by a continuous
function of $t$, on a time interval $[1,T^{(1)}]$.
\end{lemma}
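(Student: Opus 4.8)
The plan is to establish the uniform bounds in a natural sequence, starting from the quantity that is controlled \emph{by construction} and working outward. First I would bound $\eta_n$ from below (equivalently, bound $e^{-2\eta_n}$ away from zero): by the definition of $T^*$ in the excerpt, the right-hand side of \eqref{eta_n} stays positive on $[1,T^*)\times[0,1]$, and since the integrand $\Lambda - 4\pi(r_{n-1}+s_{n-1})$ is bounded above by $\Lambda$, the expression \eqref{eta_n} is bounded above by $\big(e^{-2\etab(x)}+\Lambda(t^3-1)/3\big)/t$, which is uniformly bounded on any $[1,T^{(1)}]\subset[1,T^*)$. This gives $e^{-2\eta_n}\leq C$, hence a uniform lower bound on $\eta_n$; for the upper bound on $\eta_n$ I would need a uniform \emph{positive} lower bound on the right-hand side of \eqref{eta_n}, which holds on a compact subinterval $[1,T^{(1)}]$ chosen strictly inside $[1,T^*)$ by continuity and the inductive control of $r_{n-1},s_{n-1}$.

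Next I would bound $X_n,Y_n$ and hence $r_n,s_n$. The iteration \eqref{Euler_n} is linear in $(X_n,Y_n)$ with source terms built from $X_{n-1},Y_{n-1}$ and the coefficients $a_{n-1}=-\Lambda t e^{2\eta_{n-1}}$, $b=-1/t$. Using the uniform bound on $e^{2\eta_{n-1}}$ from the previous step, I would integrate \eqref{Euler_n} along the characteristics $\gamma_1,\gamma_2$ of $D^\pm_{n-1}$ (exactly as in the proof of Lemma~\ref{A(t)}), add the two resulting integral inequalities, and set $K_n(t):=\sup_x(X_n+Y_n)(t,x)$. This yields an inequality of Gr\"onwall type, $K_n(t)\leq K(1)+\int_1^t C\,K_{n-1}(\tau)\,d\tau$, with $C$ independent of $n$ on $[1,T^{(1)}]$. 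A uniform bound then follows by induction: if $K_{n-1}\leq M$ on $[1,T^{(1)}]$, the inequality gives $K_n\leq K(1)+C\,M\,(T^{(1)}-1)$, so shrinking $T^{(1)}$ if necessary makes the map $M\mapsto K(1)+CM(T^{(1)}-1)$ a contraction, closing the induction with a single $n$-independent bound $M$. Since $r_n=e^{-2\eta_n}X_n^2$ and $s_n=e^{-2\eta_n}Y_n^2$, the uniform bounds on $e^{-2\eta_n}$, $X_n$, $Y_n$ transfer directly to $r_n,s_n$.

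Finally, $(\eta_n)_t$ is controlled algebraically once the previous bounds are in hand: differentiating \eqref{eta_n} (or using the evolution equation \eqref{1.3}, which reads $(\eta_n)_t = \tfrac{1}{2t}+\tfrac{t}{2}e^{2\eta_n}(8\pi S_n-\Lambda)$ with $S_n=(r_n+s_n)/2$) expresses $(\eta_n)_t$ as an explicit expression in $e^{2\eta_n}$, $t$, $r_n$ and $s_n$, each of which is already uniformly bounded on $[1,T^{(1)}]$. No further differential inequality is needed at this stage.

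The main obstacle I expect is the interplay between the \emph{lower} bound on $e^{-2\eta_n}$ (needed to bound $\eta_n$ from above, and hence $e^{2\eta_n}$ in the source terms) and the bound on $r_n,s_n$ entering \eqref{eta_n} at the next iteration step. The positivity of the right-hand side of \eqref{eta_n} is guaranteed only up to $T^*$, and $T^*$ itself depends on the solution; the delicate point is to choose $T^{(1)}<T^*$ uniformly in $n$. The clean way to handle this is to fix the bound $M$ on $r_n+s_n$ first via the contraction argument of the second step on a provisional interval, then use that $M$ to produce an $n$-independent $T^{(1)}$ on which the integrand in \eqref{eta_n} keeps the right-hand side bounded below by a fixed positive constant — the two choices must be made consistently, and I would carry them out by taking $T^{(1)}-1$ small enough that both the contraction condition and the positivity margin hold simultaneously.
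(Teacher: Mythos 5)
Your proof is correct, but it closes the estimates by a genuinely different mechanism than the paper. Two differences stand out. First, for the upper bound on $e^{2\eta_n}$ you enforce a positivity margin directly in the integral formula \eqref{eta_n}, which forces you to shrink the interval in a way that depends on the bound $M$ for $K_{n-1}$; the paper instead differentiates \eqref{eta_n} to get the identity \eqref{eta n_t} for $(\eta_n)_t$, uses only the \emph{unconditional} bound $e^{-2\eta_{n-1}}\leq (e^{-2\etab}+\Lambda t^3)/t$ (valid for every iterate, no margin needed), and obtains a nonlinear integral inequality \eqref{Pn-bound} for $P_n(t)=\sup_x e^{2\eta_n(t,x)}$. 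Second, to make the bounds $n$-independent you fix constant targets $M,P$ and shrink $T^{(1)}$ so that the affine map $M\mapsto K(1)+CM(T^{(1)}-1)$ is a contraction, which is exactly why you must then check that the choices of $M$, $P$ and $T^{(1)}$ can be made consistently; the paper sidesteps this bookkeeping entirely by setting $Q_n(t):=\sup\{K_m(t)+P_m(t);\ m\leq n\}$, adding the two integral inequalities to get the single self-referential bound \eqref{Qn-bound}, $Q_n(t)\leq K_0+\|e^{2\etab}\|+C(1+\Lambda)\int_1^t\tau^3(1+Q_n(\tau))^4\,d\tau$, and comparing $Q_n$ with the maximal solution $z_1$ of the corresponding integral equation. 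There $T^{(1)}$ is simply the life-span of $z_1$, and the uniform bound is the continuous function $z_1(t)$ — which is literally what the lemma asserts ("bounded \ldots by a continuous function of $t$"). Your route is more elementary (no comparison argument for integral equations) and yields constant bounds, at the price of a possibly shorter interval and the delicate ordering of choices you correctly identify as the main obstacle; the paper's route couples $K_n$ and $P_n$ automatically and needs no smallness of $T^{(1)}-1$ beyond the existence of $z_1$. Two minor slips to fix in your write-up: bounding $\eta_n$ from \emph{below} is equivalent to bounding $e^{-2\eta_n}$ from \emph{above} (not "away from zero" — the latter is what gives the upper bound on $\eta_n$); and since \eqref{eta_n} is built from the previous iterate, the formula for $(\eta_n)_t$ involves $S_{n-1}=(r_{n-1}+s_{n-1})/2$, not $S_n$ — harmless here, since both are bounded by the time you invoke it.
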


\begin{proof}
Set 
$$
\aligned
& P_n(t):=\sup\{e^{2\eta_n(t,x)} \ | x\in[0,1]\},
\\
& K_n(t):=\sup\{(X_n+Y_n)(t,x)\ |  x\in[0,1]\}.
\endaligned
$$
Using equations (\ref{Euler_n}), we apply the same argument used in the proof of Lemma~\ref{A(t)} and obtain
\begin{equation}\label{Kn-bound}
K_n(t)\leq K_0+\int_1^t m_{n-1}(\tau)K_{n-1}(\tau)\ d\tau,
\end{equation}
with
\begin{align*}
m_n(t)&=\sup\{\Lambda t e^{2\eta_n}+\frac{1}{t} ; \ x\in[0,1]\}\\
&\leq t(1+\Lambda)(1+P_n(t)),
\end{align*}
so that
\begin{equation}\label{Kn-bound1}
K_n(t)\leq K_0+(1+\Lambda)\int_1^t \tau(1+ P_{n-1}(\tau))K_{n-1}(\tau)\ d\tau.
\end{equation}
On the other hand equation (\ref{eta_n}) implies
\begin{equation}
\label{eta n_t}
(\eta_n)_t=\frac{1}{2t}-\frac{\Lambda}{2} t e^{2\eta_n}+2\pi te^{2\eta_n-2\eta_{n-1}}(X_{n-1}^2+Y_{n-1}^2),
\end{equation}
and since $e^{-2\eta_{n-1}}\leq \frac{e^{-2\etab}+\Lambda t^3}{t}\leq C(1+\Lambda)t^2$, it follows that
\begin{equation}
\label{Pn-bound}
P_n(t)\leq \|e^{2\etab}\|+C(1+\Lambda)\int_1^t \tau^3(1+ K_{n-1}(\tau))^2(1+P_{n}(\tau))^2\ d\tau.
\end{equation}
Now defining $Q_n(t):=\sup\{K_m(t)+P_m(t) ; m\leq n\}$ and adding (\ref{Kn-bound}) and (\ref{Pn-bound}) we arrive at
\begin{equation}
\label{Qn-bound}
Q_n(t)\leq K_0+ \|e^{2\etab}\|+C(1+\Lambda)\int_1^t \tau^3(1+Q_{n}(\tau))^4\ d\tau.
\end{equation}
Let $[1,T^{(1)})$ (with $T^{(1)}\in(1,T^*]$) be the maximal interval of existence for 
the solution $z_1$ of the integral equation
$$
z_1(t)= K_0+ \|e^{2\etab}\|+C(1+\Lambda)\int_1^t \tau^3(1+z_1(\tau))^4\ d\tau, \ \ z_1(1)= K_0+ \|e^{2\etab}\|.
$$
Then $Q_n(t)\leq z_1(t)$, for all $n\in \N$ and $t\in (1,T^{(1)})$. The same is true for $K_n$ and $P_n$. It follows that $\eta_n$, $X_n$, $Y_n$, and then $r_n$, $s_n$ and $(\eta_n)_t$ are uniformly bounded. To bound $(\eta_n)_t$, we use (\ref{eta n_t}).
\end{proof}

\begin{lemma}\label{der-bounds}
The sequences $(\eta_n)_x$, $(X_n)_x$, $(Y_n)_x$,  $(X_n)_t$,
$(Y_n)_t$, $(r_n)_x$, $(s_n)_x$, $(r_n)_t$ and $(s_n)_t$ are
uniformly bounded in $n$, the sup-norm by a continuous function of
$t$ on a time interval $[1,T^{(2)}]$.
\end{lemma}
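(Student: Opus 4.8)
The plan is to mirror the structure of Lemmas~\ref{X1Y1}--\ref{A(t)} and of the proof of Lemma~\ref{bounds}, but now at the level of first derivatives and with the successive iterate levels $n$, $n-1$, $n-2$ coupled together. Throughout I would work on a subinterval $[1,T^{(2)}]\subseteq[1,T^{(1)}]$, so that the zeroth-order bounds of Lemma~\ref{bounds}---in particular those on $e^{\pm 2\eta_n}$, $X_n$, $Y_n$, $K_n$ and $(\eta_n)_t$, all uniform in $n$---are at our disposal; note also that $\lambda_n$ is then bounded on $[1,T^{(2)}]$ through \eqref{lambda_t}--\eqref{lambda}. Since the data \eqref{initial} are the same for every $n$, all initial derivative norms, such as $(X_n)_x(1,\cdot)=(e^{\etab}\sqrt{\rbar})_x$, $(\eta_n)_x(1,\cdot)=\etab_x$ and $(\lambdab-\etab)_x$, are fixed constants independent of $n$.

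First I would control the spatial derivatives of the metric quantities. Differentiating \eqref{eta_n} in $x$ and solving for $(\eta_n)_x$ expresses it through the data and $\int_1^t\tau^2(r_{n-1}+s_{n-1})_x\,d\tau$, with a prefactor $e^{2\eta_n}$ that is already bounded; hence
\[
\|(\eta_n)_x\|(t)\le C(t)\Big(1+\int_1^t\|(r_{n-1}+s_{n-1})_x\|(\tau)\,d\tau\Big).
\]
Similarly, differentiating \eqref{lambda_t} in $x$ gives $(\lambda_n-\eta_n)_{tx}=2\Lambda t\,(\eta_n)_x e^{2\eta_n}$, which upon time integration bounds $\|(\lambda_n-\eta_n)_x\|(t)$ by the data plus $\int_1^t\|(\eta_n)_x\|\,d\tau$. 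Next, differentiating the iteration \eqref{Euler_n} in $x$ produces, exactly as in Lemma~\ref{X1Y1}, transport equations along the characteristics $\gamma_1,\gamma_2$ of $D^\pm_{n-1}$ of the schematic form
\[
D^+_{n-1}(X_n)_x = e^{\eta_{n-1}-\lambda_{n-1}}(\lambda_{n-1}-\eta_{n-1})_x\,(X_n)_x + a_{n-1}(X_{n-1})_x + b\,(Y_{n-1})_x - 2\Lambda t\,(\eta_{n-1})_x e^{2\eta_{n-1}}X_{n-1},
\]
and symmetrically for $(Y_n)_x$. Integrating along $\gamma_1,\gamma_2$, taking absolute values and suprema as in Lemma~\ref{A(t)}, and writing $A_n(t):=\sup_x(|(X_n)_x|+|(Y_n)_x|)$, I obtain
\[
A_n(t)\le A(1)+\int_1^t\big(v_{n-1}A_n+\alpha_{n-1}A_{n-1}+h_{n-1}K_{n-1}\big)(\tau)\,d\tau,
\]
whose coefficients $v_{n-1},\alpha_{n-1},h_{n-1}$ are sup-norms of the quantities just estimated. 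Finally, the algebraic relation $r_n=e^{-2\eta_n}X_n^2$ and its analogue for $s_n$ convert between $(r_n)_x,(s_n)_x$ and the pairs $(X_n)_x,(\eta_n)_x$ and $(Y_n)_x,(\eta_n)_x$.

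The step I expect to be the main obstacle is closing this system, since the inequalities interlock levels $n$, $n-1$ and $n-2$: $(\eta_n)_x$ feeds on $(r_{n-1}+s_{n-1})_x$, which feeds on $(X_{n-1})_x$ and $(\eta_{n-1})_x$, and so on. To close it I would follow the device used for $Q_n$ in Lemma~\ref{bounds} and introduce the running supremum
\[
B_n(t):=\sup\Big\{A_m(t)+\|(\eta_m)_x\|(t)+\|(\lambda_m-\eta_m)_x\|(t)\ \big|\ m\le n\Big\}.
\]
Because every source term above carries an index $\le n$, it is dominated by $B_n$; combining the three inequalities then yields a single scalar relation $B_n(t)\le B(1)+\int_1^t\Phi(\tau,B_n(\tau))\,d\tau$, where $\Phi$ is continuous and polynomial in its second argument, with coefficients built only from the Lemma~\ref{bounds} bounds. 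Comparison with the maximal solution $z_2$ of $z_2(t)=B(1)+\int_1^t\Phi(\tau,z_2(\tau))\,d\tau$ gives $B_n(t)\le z_2(t)$ for all $n$ on the existence interval of $z_2$, which we may shrink so that $T^{(2)}\le T^{(1)}$. The difficulty here is essentially one of careful bookkeeping: one must check that all source terms really do carry indices $\le n$ so that the running-supremum comparison applies, and that $\Phi$ is monotone and continuous so that the majorizing equation admits a solution on an interval uniform in $n$.

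It then remains to recover the time derivatives, which is purely algebraic. Solving \eqref{Euler_n} for $(X_n)_t$ gives
\[
(X_n)_t = a_{n-1}X_{n-1} + b\,Y_{n-1} - e^{\eta_{n-1}-\lambda_{n-1}}(X_n)_x,
\]
whose right-hand side is bounded: $a_{n-1},b$ and $X_{n-1},Y_{n-1}$ by Lemma~\ref{bounds}, the factor $e^{\eta_{n-1}-\lambda_{n-1}}$ because $\lambda_n$ is bounded on $[1,T^{(2)}]$, and $(X_n)_x$ by the previous step; the same applies to $(Y_n)_t$. Differentiating $r_n=e^{-2\eta_n}X_n^2$ in $t$ and using the bounds on $(\eta_n)_t$ (Lemma~\ref{bounds}) and on $(X_n)_t$ bounds $(r_n)_t$, and likewise $(s_n)_t$. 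This establishes the claimed uniform bounds on all nine sequences over $[1,T^{(2)}]$.
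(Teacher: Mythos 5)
Your proposal is correct and follows essentially the same route as the paper: differentiate the iteration \eqref{Euler_n} and \eqref{eta_n} in $x$, derive coupled sup-norm integral inequalities along the characteristics of $D^{\pm}_{n-1}$, close them via a running supremum over iterate indices compared against the maximal solution of a scalar integral equation on $[1,T^{(2)})$, and then recover the time derivatives and the bounds on $(r_n)_x$, $(s_n)_x$, $(r_n)_t$, $(s_n)_t$ algebraically. The only cosmetic difference is that you track $(\eta_n)_x$ and $(\lambda_n-\eta_n)_x$ directly where the paper tracks $B_n(t)=\sup_x|(e^{-2\eta_n})_x|$ and estimates $(\lambda_{n-1}-\eta_{n-1})_x$ inline by $Cs^2(1+B_{n-1}(s))$; these are interchangeable given the zeroth-order bounds of Lemma~\ref{bounds} (and, incidentally, your differentiated equation correctly carries $X_{n-1}$ in the $(a_{n-1})_x$ term, where the paper's display has a harmless typo $X_n$).
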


\begin{proof} Set
$$
\aligned
A_n(t) :=& \sup\{|(X_n)_x|+|(Y_n)_x|(t,x)\ |  x\in[0,1]\},
\\
A_0:=& \sup\{(|\Xbar_x|+|\Ybar_x|)(x)\ | x\in[0,1]\},
\\
B_n(t):=&\sup\{|(e^{-2\eta_n(t,x)})_x| \ | x\in[0,1]\}.
\endaligned
$$
Then taking the spatial derivative in (\ref{Euler_n}) gives the following equations:
\begin{align*}
D_{n-1}^+(X_n)_x =&  (\lambda_{n-1}-\eta_{n-1})_xe^{\eta_{n-1}-\lambda_{n-1}}(X_n)_x-2\Lambda t(\eta_{n-1})_xe^{2\eta_{n-1}}X_n\\&-\Lambda te^{2\eta_{n-1}}(X_{n-1})_x-\frac{1}{t}(Y_{n-1})_x ,
\end{align*}
\begin{align*}
D_{n-1}^-(Y_n)_x =&  (\eta_{n-1}-\lambda_{n-1})_xe^{\eta_{n-1}-\lambda_{n-1}}(Y_n)_x-2\Lambda t(\eta_{n-1})_xe^{2\eta_{n-1}}Y_n\\&-\Lambda te^{2\eta_{n-1}}(Y_{n-1})_x-\frac{1}{t}(X_{n-1})_x.
\end{align*}
But, using Lemma~\ref{bounds}, we have 
\begin{align*}
|(\lambda_{n-1}-\eta_{n-1})_x(s)|&=|(\lambdab-\etab)_x+2\Lambda\int_1^s \tau(\eta_{n-1})_xe^{2\eta_{n-1}}\ d\tau|\\&\leq Cs^2(1+B_{n-1}(s)),
\end{align*}
so that applying the same argument as in Lemma~\ref{A(t)} and using Lemma~\ref{bounds} again, we obtain
\begin{equation}
\label{An-bound}
A_n(t)\leq A_0+C \int_1^t \tau^2 (1+B_{n-1}(\tau))(1+A_{n-1}(\tau)+A_{n}(\tau)))\ d\tau.
\end{equation}
On the other hand we have
\begin{equation}
\label{eta_n x}
(e^{-2 \eta_n(t,x)})_x = {-2\etab_x e^{-2 \etab} \over t} - {4 \pi \over t} \int_1^t \tau^2 (r_{n-1}+s_{n-1})_x(\tau,x))\ d\tau,
\end{equation}
which implies
\begin{equation}
\label{Bn-bound}
B_n(t)\leq 2\|\etab_x e^{-2 \etab}\|+C \int_1^t \tau^2 (A_{n-1}+B_{n-1})(\tau)\ d\tau.
\end{equation}
We have used the fact that
\begin{align*}
|(r_n+s_n)_x|&= |(e^{-2 \eta_n})_x(X_n^2+Y_n^2)+2e^{-2\eta_{n}}\big(X_n(X_n)_x+Y_n(Y_n)_x\big)|\\
&\leq C(A_{n}+B_{n})(t).
\end{align*}
Now defining $E_n(t):=\sup\{A_m(t)+B_m(t) ; m\leq n\}$ and adding (\ref{An-bound}) and (\ref{Bn-bound}) we arrive at
\begin{equation}
\label{En-bound}
E_n(t)\leq A_0+ 2\|\etab_x e^{-2 \etab}\|+C\int_1^t \tau^2(1+E_{n}(\tau))^2\ d\tau.
\end{equation}
Let $[1,T^{(2)})$ (with $T^{(2)}\leq T^{(1)}$) be the maximal interval of existence for the solution $z_2$ of the integral equation
$$
\aligned
z_2(t) & = A_0+ 2\|\etab_x e^{-2 \etab}\|+C\int_1^t \tau^2(1+z_{2}(\tau))^2\ d\tau, 
\\
z_2(1) & = A_0+ 2\|\etab_x e^{-2 \etab}\|.
\endaligned 
$$
Then $E_n(t)\leq z_2(t)$, for all $n\in \N$ and $t\in (1,T^{(2)})$. The same is true for $A_n$ and $B_n$. It follows that $(\eta_n)_x$, $(X_n)_x$, $(Y_n)_x$, $(X_n)_t$, $(Y_n)_t$, $(r_n)_x$, $(s_n)_x$ and then $(r_n)_t$, $(s_n)_t$ and $(\eta_n)_{tx}$ are uniformly bounded.
\end{proof} 

\begin{lemma}\label{conv}
The sequences $(\eta_n)$, $(X_n)$, and $(Y_n)$ converge uniformly on
$[1,T^{(3)}]$ for all $T^{(3)}$ less than $T^{(2)}$.
\end{lemma}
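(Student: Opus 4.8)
The plan is to show that the three sequences are uniformly Cauchy on $[1,T^{(3)}]$ by controlling the successive differences through a Gronwall-type recursion whose iteration produces a convergent series. To this end, introduce the difference quantities
$$
\aligned
d_n(t) &:= \|(e^{-2\eta_n}-e^{-2\eta_{n-1}})(t,\cdot)\|, \\
e_n(t) &:= \|(X_n-X_{n-1})(t,\cdot)\|+\|(Y_n-Y_{n-1})(t,\cdot)\|, \\
\delta_n(t) &:= d_n(t)+e_n(t),
\endaligned
$$
and recall from Lemmas~\ref{bounds} and~\ref{der-bounds} that on $[1,T^{(2)}]$ the iterates $\eta_n,X_n,Y_n,r_n,s_n$ together with their first derivatives are uniformly bounded in $n$ by a continuous function of $t$. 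I fix $T^{(3)}<T^{(2)}$ and let $C$ denote the constants resulting from these bounds on the compact interval $[1,T^{(3)}]$.

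First I would estimate the metric difference. Subtracting two instances of (\ref{eta_n}) makes the data term and the $\Lambda$ term cancel, leaving
$$
(e^{-2\eta_n}-e^{-2\eta_{n-1}})(t,x) = -\frac{4\pi}{t}\int_1^t \tau^2\big((r_{n-1}+s_{n-1})-(r_{n-2}+s_{n-2})\big)(\tau,x)\,d\tau .
$$
Since $r_m+s_m=e^{-2\eta_m}(X_m^2+Y_m^2)$, the uniform bounds give $\|(r_{n-1}+s_{n-1})-(r_{n-2}+s_{n-2})\|\leq C\,\delta_{n-1}$, whence $d_n(t)\leq C\int_1^t \delta_{n-1}(\tau)\,d\tau$. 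Next I would estimate the fluid difference, the delicate step, because the transport speed $e^{\eta-\lambda}$ carried by $D^{\pm}$ changes with the iterate. Writing $\Xi_n=X_n-X_{n-1}$ and subtracting the two versions of the first equation in (\ref{Euler_n}), I use the commutator identity $D^+_{n-1}X_n-D^+_{n-2}X_{n-1}=D^+_{n-1}\Xi_n+(e^{\eta_{n-1}-\lambda_{n-1}}-e^{\eta_{n-2}-\lambda_{n-2}})(X_{n-1})_x$ to obtain an equation of the schematic form
$$
D^+_{n-1}\Xi_n = a_{n-1}\Xi_{n-1}+b\,\Upsilon_{n-1}+(a_{n-1}-a_{n-2})X_{n-2}-\big(e^{\eta_{n-1}-\lambda_{n-1}}-e^{\eta_{n-2}-\lambda_{n-2}}\big)(X_{n-1})_x ,
$$
with an analogous equation for $\Upsilon_n=Y_n-Y_{n-1}$ along $D^-_{n-1}$. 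Here $(X_{n-1})_x$ is uniformly bounded by Lemma~\ref{der-bounds}, the term $a_{n-1}-a_{n-2}=-\Lambda t(e^{2\eta_{n-1}}-e^{2\eta_{n-2}})$ is controlled by $d_{n-1}$, and, using (\ref{lambda_t}) in the integrated form $(\lambda_m-\eta_m)=(\lambdab-\etab)+\Lambda\int_1^t\tau e^{2\eta_m}\,d\tau-\ln t$, the speed difference is bounded by $C\int_1^t d_{n-1}(\tau)\,d\tau$. Integrating each difference equation along the corresponding characteristic, taking absolute values and spatial suprema, and adding, I arrive at $e_n(t)\leq C\int_1^t \delta_{n-1}(\tau)\,d\tau$.

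Combining the two estimates gives $\delta_n(t)\leq C\int_1^t \delta_{n-1}(\tau)\,d\tau$ on $[1,T^{(3)}]$. Since $\sup_{[1,T^{(3)}]}\delta_1=:M<\infty$, iterating this inequality yields
$$
\sup_{[1,T^{(3)}]}\delta_n \leq \frac{\big(C(T^{(3)}-1)\big)^{n-1}}{(n-1)!}\,M ,
$$
which is summable; hence $(e^{-2\eta_n})$, $(X_n)$ and $(Y_n)$ are uniformly Cauchy and converge uniformly on $[1,T^{(3)}]$. To pass from $e^{-2\eta_n}$ to $\eta_n$, I note that on $[1,T^{(3)}]\subset[1,T^\ast)$ the right-hand side of (\ref{eta_n}) is, thanks to the uniform bounds of Lemma~\ref{bounds}, bounded below by a positive constant independent of $n$; since $w\mapsto-\tfrac12\ln w$ is Lipschitz on such a set, the uniform convergence of $e^{-2\eta_n}$ transfers to that of $\eta_n$, completing the argument.

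The step I expect to be the main obstacle is the fluid difference estimate: because the operators $D^{\pm}_{n-1}$ carry the iterate-dependent speeds $\pm e^{\eta_{n-1}-\lambda_{n-1}}$, one cannot compare $X_n$ with $X_{n-1}$ along a common characteristic. The device that closes the recursion is the absorption of the commutator term $\big(e^{\eta_{n-1}-\lambda_{n-1}}-e^{\eta_{n-2}-\lambda_{n-2}}\big)(X_{n-1})_x$ by means of the uniform derivative bound of Lemma~\ref{der-bounds} together with the control of the speed difference by $d_{n-1}$; without the spatial-derivative control furnished by that lemma the difference inequality would fail to be self-contained.
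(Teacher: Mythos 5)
Your proof is correct and follows essentially the same route as the paper: difference equations for consecutive iterates integrated along the characteristics, with the iterate-dependent speeds handled by exactly the commutator term $\bigl(e^{\eta_{n-1}-\lambda_{n-1}}-e^{\eta_{n-2}-\lambda_{n-2}}\bigr)(X_{n-1})_x$ that the paper writes as $F_n$, $G_n$, absorbed using the derivative bounds of Lemma~\ref{der-bounds}, and closed by iterating to factorial decay of successive differences. The only variation is bookkeeping: you track $e^{-2\eta_n}$ directly through the integral formula \eqref{eta_n}, so your recursion $\delta_n\leq C\int_1^t\delta_{n-1}$ needs no Gronwall step but requires the final Lipschitz-of-logarithm argument, whereas the paper tracks $\eta_n$ via the time-differentiated equation \eqref{eta n_t} and the mean value theorem, which places $\alpha_n$ on both sides of the inequality and is resolved by Gronwall before the induction.
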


\begin{proof}
For $t\in[1,T^{(3)}]$, define 
$$
\aligned
\theta_n(t):=& \sup\{|X_{n+1}-X_n|(t,x)+|Y_{n+1}-Y_n|(t,x); x\in[0,1]\},
\\
\alpha_n(t):=& \sup\{\|(\eta_{n+1}-\eta_n)(s)\|+\|(X_{n+1}-X_n)(s)\|+\|(Y_{n+1}-Y_n)(s)\|; s\in[1,t]\},
\\
\Tilde{X}_n:=&X_{n+1}-X_n, \qquad \Tilde{Y}_n:=Y_{n+1}-Y_n.
\endaligned
$$
Combining equations (\ref{Euler_n}) written for $n+1$ and $n$ gives
\begin{equation}
\label{tilde X}
\aligned
& D^+_{n}\Tilde X_n = a_{n}\Tilde X_{n-1}+b\Tilde Y_{n-1}+F_n,
\\
& D^-_{n}\Tilde Y_n = b\Tilde X_{n-1}+a_{n}\Tilde Y_{n-1}+G_n,
\endaligned
\end{equation}
with 
$$
\aligned
F_n &=-(e^{2\eta_n}-e^{2\eta_{n-1}})\Lambda tX_{n-1}-(e^{\eta_n-\lambda_n}-e^{\eta_{n-1}-\lambda_{n-1}})(X_n)_x, 
\\
G_n & =-(e^{2\eta_n}-e^{2\eta_{n-1}})\Lambda tY_{n-1}+(e^{\eta_n-\lambda_n}-e^{\eta_{n-1}-\lambda_{n-1}})(Y_n)_x.
\endaligned
$$
Reasoning as in the proof of Lemma~\ref{A(t)} we have 
\begin{equation}
\label{theta-bound}
\theta(t)\leq \int_1^t\big(m_n(\tau)\theta_{n-1}+\sup\{|F_n(\tau,x)|+|G_n(\tau,x)| ; x\in[0,1]\}\big)\ d\tau,
\end{equation}
and this implies that
\begin{equation}
\label{tilde X n-bound}
|\Tilde X_n|+|\Tilde Y_n|\leq C \int_1^t\alpha_{n-1}(\tau)\ d\tau,
\end{equation}
we have used the mean value theorem to handle the terms $e^{2\eta_n}-e^{2\eta_{n-1}}$ and\\ $e^{\eta_n-\lambda_n}-e^{\eta_{n-1}-\lambda_{n-1}}$, and the previous lemmas.\\ On the other hand equation (\ref{eta n_t}) implies 
\begin{align*} 
(\eta_{n+1}-\eta_n)_t
= & -\frac{\Lambda}{2} t (e^{2\eta_{n+1}}-e^{2\eta_n}) 
  +2\pi te^{2\eta_{n+1}-2\eta_{n}}\big((X_{n+1}^2-X_{n}^2)+(Y_{n+1}^2-Y_{n}^2)\big)\notag\\
&+2\pi t(e^{2\eta_{n+1}-2\eta_{n}}-e^{2\eta_{n}-2\eta_{n-1}})(X_n^2+Y_n^2),
\end{align*}
and using Lemma \ref{bounds} and the mean value theorem it follows after integration in time that
$$
|\eta_{n+1}-\eta_n|\leq C \int_1^t(|\eta_{n+1}-\eta_n|+|\eta_{n}-\eta_{n-1}|+|X_{n+1}-X_n|+|Y_{n+1}-Y_n|)(\tau)\ d\tau,
$$
so that
\begin{equation}
\label{tilde eta n-bound}
|\eta_{n+1}-\eta_n|\leq C \int_1^t(\alpha_n+\alpha_{n-1})(\tau)\ d\tau.
\end{equation}
Combining (\ref{tilde X n-bound}) and (\ref{tilde eta n-bound}) leads to
$$
\alpha_n(t)\leq C \int_1^t(\alpha_n+\alpha_{n-1})(\tau)\ d\tau,
$$
which, by Gronwall's inequality, implies
$$
\alpha_n(t)\leq C \int_1^t\alpha_{n-1}(\tau)\ d\tau,
$$
and by induction
$$
\alpha_n(t)\leq \frac{C^{n+1}}{n!},
$$
and so $\alpha_n\to 0$ as $n\to\infty$. This establishes the uniform convergence of $\eta_n$, $X_n$, and $Y_n$.
\end{proof}

It follows from (\ref{eta n_t}) that the sequence $(\eta_n)_t$ converges uniformly as well. In the following lemma, the uniform convergence of other iterates derivatives is proven.

\begin{lemma}\label{der-conv}
The sequences $(\eta_n)_x$, $(X_n)_x$, $(Y_n)_x$, $(X_n)_t$ and $(Y_n)_t$ converge uniformly on $[1,T^{(4)}]$, where $[1,T^{(4)}] \subset [1,T^{(3)}]$.
\end{lemma}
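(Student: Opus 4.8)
The plan is to repeat, now one derivative higher, the Cauchy-sequence argument used in the proof of Lemma~\ref{conv}. I introduce the first-order difference quantities $\widetilde W_n:=(X_{n+1}-X_n)_x$ and $\widetilde Z_n:=(Y_{n+1}-Y_n)_x$ together with $(\eta_{n+1}-\eta_n)_x$, and set
$$
\beta_n(t):=\sup\{\|(\eta_{n+1}-\eta_n)_x(s)\|+\|\widetilde W_n(s)\|+\|\widetilde Z_n(s)\|;\ s\in[1,t]\}.
$$
I aim at a Gronwall inequality of the form
$$
\beta_n(t)\le \varepsilon_n+C\int_1^t\big(\beta_n+\beta_{n-1}\big)(\tau)\,d\tau,
$$
where $\varepsilon_n\to0$ collects every term driven to zero by the zeroth-order convergence already proved in Lemma~\ref{conv}. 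Since the $\beta_n$ are uniformly bounded by Lemma~\ref{der-bounds}, shrinking the interval so that $C(T^{(4)}-1)$ is strictly less than $1$ turns this into $\sup_{[1,T^{(4)}]}\beta_n\le 2\varepsilon_n+\theta\sup_{[1,T^{(4)}]}\beta_{n-1}$ with $\theta<1$, whence $\beta_n\to0$. This forced shrinking is precisely why the statement only claims convergence on some $[1,T^{(4)}]\subset[1,T^{(3)}]$.

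To obtain the inequality I difference the derivative equations of Lemma~\ref{der-bounds} at levels $n+1$ and $n$, integrate $\widetilde W_n$ and $\widetilde Z_n$ along the integral curves of $D^+_n$ and $D^-_n$ respectively, and sort the right-hand side into three groups: the self-terms $(\lambda_n-\eta_n)_xe^{\eta_n-\lambda_n}\widetilde W_n$ and its analogue, which produce the $\int\beta_n$ contribution; the genuine lower-level derivative differences $\widetilde W_{n-1},\widetilde Z_{n-1},(\eta_n-\eta_{n-1})_x$, which produce $\int\beta_{n-1}$; and coefficient differences such as $e^{2\eta_n}-e^{2\eta_{n-1}}$ and $(\lambda_n-\eta_n)_x-(\lambda_{n-1}-\eta_{n-1})_x$, which by the mean value theorem and the integral formula for $(\lambda-\eta)_x$ are bounded by $\alpha_{n-1},\alpha_n$ (Lemma~\ref{conv}) and by $\beta_{n-1}$. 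The $(\eta_{n+1}-\eta_n)_x$ piece is handled separately by differentiating~(\ref{eta_n x}): its evolution only sees differences of $\int(r_n+s_n)_x$, which by the identity used in Lemma~\ref{der-bounds} are controlled by $\widetilde W,\widetilde Z$ and by $(\eta)_x$-differences, i.e.\ again by $\beta$.

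The one genuinely new difficulty is the mismatch of the transport speeds $e^{\eta_n-\lambda_n}$ and $e^{\eta_{n-1}-\lambda_{n-1}}$ between consecutive iterates: differencing the principal part naively yields a term $\big(e^{\eta_n-\lambda_n}-e^{\eta_{n-1}-\lambda_{n-1}}\big)(X_n)_{xx}$ carrying a \emph{second} spatial derivative of $X_n$, which is unavailable since the data, hence every iterate, is only $C^1$. I expect this to be the main obstacle. I would resolve it by never forming $(X_n)_{xx}$: keeping the transport in integral form along the two characteristics of $D^+_n$ and $D^+_{n-1}$ issuing from $(t,x)$, whose feet $a_{n+1},a_n$ satisfy $|a_{n+1}-a_n|\le C\int_1^t|e^{\eta_n-\lambda_n}-e^{\eta_{n-1}-\lambda_{n-1}}|\,d\tau\le C\int_1^t\alpha_{n-1}\to0$, the corresponding initial-data contribution reduces to $|\phi(a_{n+1})-\phi(a_n)|$ with $\phi:=\Xbar_x=(e^{\etab}\sqrt{\rbar})_x$ a \emph{fixed} uniformly continuous function on $[0,1]$, hence bounded by its modulus of continuity evaluated at $C\int\alpha_{n-1}$, which tends to $0$. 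The only remaining same-level, different-point source terms are estimated through a uniform-in-$n$ spatial modulus of continuity of the derivative fields, propagated from that of the fixed $C^1$ data along the uniformly Lipschitz characteristics using the bounds of Lemmas~\ref{bounds} and~\ref{der-bounds}; all these pieces feed the vanishing $\varepsilon_n$.

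Finally, once $(X_n)_x,(Y_n)_x,(\eta_n)_x$ are shown to converge uniformly, $(\lambda_n)_x$ converges through the integral expression for $(\lambda-\eta)_x$, and the time derivatives follow for free from~(\ref{Euler_n}) in the form $(X_n)_t=-\Lambda te^{2\eta_{n-1}}X_{n-1}-\tfrac1t Y_{n-1}-e^{\eta_{n-1}-\lambda_{n-1}}(X_n)_x$ and its $D^-$-analogue, each a continuous function of sequences already known to converge uniformly. This yields the uniform convergence of all of $(\eta_n)_x,(X_n)_x,(Y_n)_x,(X_n)_t,(Y_n)_t$ on $[1,T^{(4)}]$.
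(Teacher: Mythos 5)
Your proposal is correct and follows essentially the same route as the paper's proof: the same difference functional $\beta_n$, the same crucial device of integrating each iterate's differentiated transport equation along its \emph{own} characteristics $\gamma^1_n$, $\gamma^1_{n-1}$ (so that the second derivative $(X_n)_{xx}$ is never formed) with the two curves compared via the uniform convergence from Lemma~\ref{conv}, the same splitting of the source difference into a same-level/different-point term and a same-point/different-level term bounded by $C\varepsilon + C(\beta_n+\beta_{n-1})$, and the same closing Gronwall-type inequality. The only deviations are minor: the paper closes with Gronwall plus an $\varepsilon$-argument rather than shrinking $T^{(4)}$ to force a contraction (so the shrinking you describe is not actually the reason for the smaller interval), and you are in fact more explicit than the paper about two points it glosses over, namely the initial-data contribution $|\Xbar_x(a_{n+1})-\Xbar_x(a_n)|$ at the characteristics' feet and the need for an $n$-uniform modulus of continuity of the source terms $\tilde C_n$, which the paper handles by simply invoking ``uniform continuity of $\tilde C_n$''.
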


\begin{proof} We set 
$$
\beta_n(t):=\sup\Big\{\|(\eta_{n+1}-\eta_n)_x(s)\|+\|(X_{n+1}-X_n)_x(s)\|+\|(Y_{n+1}-Y_n)_x(s)\|; s\in[1,t]\Big\}.
$$
Taking the space derivative in equations (\ref{Euler_n}) gives
\begin{equation}\label{C n}
D_n^+(X_{n+1})_x=\Tilde C_n, \qquad
 D_n^-(Y_{n+1})_x=\Tilde D_n
\end{equation}
 with
\begin{align*}
\Tilde C_n&=(\lambda_n-\eta_n)_x e^{\eta_n-\lambda_n}(X_{n+1})_x-2\Lambda t(\eta_n)_xe^{2\eta_n} X_n-\Lambda te^{2\eta_n}(X_n)_x-\frac{1}{t}(Y_n)_x \\
\Tilde D_n&=(\eta_n-\lambda_n)_x e^{\eta_n-\lambda_n}(Y_{n+1})_x-2\Lambda t(\eta_n)_xe^{2\eta_n} Y_n-\Lambda te^{2\eta_n}(Y_n)_x-\frac{1}{t}(X_n)_x.
\end{align*}
Let $\gamma_n^1$ and $\gamma_n^2$ be 
the integral curves corresponding to $D_n^+$ and $D_n^-$ respectively, that start from the point $(s,x)$ that is, for each n,
\begin{equation}\label{gamma}
(\gamma_n^1)_t=e^{\eta_n-\lambda_n}, \ (\gamma_n^2)_t=-e^{\eta_n-\lambda_n}, \ \gamma_n^1(s)=\gamma_n^2(s)=x.
\end{equation}
Integrating the first equation in (\ref{C n}) along $\gamma_n^1$, the second one along $\gamma_n^2$ yields after subtraction
\begin{equation}
\label{X n+1-X n}
\aligned
& (X_{n+1}-X_n)(s)=\int_1^s\big(\Tilde C_{n}(\tau,\gamma_n^1(\tau))-\Tilde C_{n-1}(\tau,\gamma_{n-1}^1(\tau))\big)\ d\tau,
\\
& (Y_{n+1}-Y_n)(s)=\int_1^s\big(\Tilde D_{n}(\tau,\gamma_n^1(\tau))-\Tilde D_{n-1}(\tau,\gamma_{n-1}^1(\tau))\big)\ d\tau.
\endaligned
\end{equation}
But we have 
\begin{equation}\label{tilde (Cn-Cn-1)}
\aligned
& |\Tilde C_{n}(\tau,\gamma_n^1(\tau))-\Tilde C_{n-1}(\tau,\gamma_{n-1}^1(\tau)|
\\
& \leq |\Tilde C_{n}(\tau,\gamma_n^1(\tau))-\Tilde C_{n}(\tau,\gamma_{n-1}^1(\tau)|+|(\Tilde C_{n}-\Tilde C_{n-1})(\tau)|.
\endaligned
\end{equation}

Given now any $\varepsilon>0$, we find, for any sufficiently large $n$,
\begin{equation}\label{tilde Cn}
|\Tilde C_{n}(\tau,\gamma_n^1(\tau))-\Tilde C_{n}(\tau,\gamma_{n-1}^1(\tau)|\leq C\varepsilon,
\end{equation}
we have used the uniform convergence of $\eta_n$, the uniform continuity of $\Tilde C_n$ over the compact set $[1,T^{(4)}]\times\big(\gamma_n^1([1,T^{(4)}])\cup\gamma_{n-1}^1([1,T^{(4)}])\big)$, and the following inequality which follows from (\ref{gamma})
\begin{equation}\label{gamma n}
|\gamma_n^1-\gamma_{n-1}^1|(\tau)
\leq C \, \sup \Big\{ \|(e^{2\eta_n}-e^{2\eta_n-1})(t)\| \ ; \ t\in[1,T^{(4)}] \Big\}.
\end{equation}
For the second term of the right hand side in (\ref{tilde (Cn-Cn-1)}) we have 
\begin{eqnarray*}
\aligned
&\Tilde C_{n}-\Tilde C_{n-1} 
\\
& =\big((\lambda_{n}-\eta_{n})_x-(\lambda_{n-1}-\eta_{n-1})_x\big)e^{\eta_n-\lambda_n}(X_{n+1})_x\\
& \quad+
(\lambda_{n-1}-\eta_{n-1})_x\Big(e^{\eta_n-\lambda_n}(X_{n+1}-X_n)_x+(e^{\eta_n-\lambda_n}-e^{\eta_{n-1}-
\lambda_{n-1}})(X_{n+1})_x\Big)\\
& \quad 
-2\Lambda t (\eta_{n}-\eta_{n-1})_xe^{2\eta_n}X_n-2\Lambda t(\eta_{n-1})_x \Big(e^{2\eta_n}(X_{n}-X_{n-1})+(e^{2\eta_n}-e^{2\eta_{n-1}})(X_{n-1})\Big)\\
& \quad -\Lambda t (e^{2\eta_{n}}-e^{2\eta_{n-1}})(X_n)_x-\Lambda te^{2\eta_{n-1}}\big((X_{n})_x-(X_{n-1})_x\big)-\frac{1}{t}\big((Y_{n})_x-(Y_{n-1})_x\big),
\endaligned
\end{eqnarray*}
and 
$$
(\lambda_{n}-\eta_{n})_x=(\lambdab-\etab)_x+2\Lambda\int_1^t t(\eta_n)_xe^{2\eta_n}\ d\tau,
$$
 so that
\begin{align*}
|(\lambda_{n}-\eta_{n})_x-(\lambda_{n-1}-\eta_{n-1})_x|\leq C\varepsilon +C\sup\{\|(\eta_n-\eta_{n-1})_x(t)\| \ ; t\in[1,T^{(4)}]\}.
\end{align*}
Thus, for $n$ sufficiently large,
\begin{equation}\label{bound tilde Cn}
 \|(\Tilde C_{n}-\Tilde C_{n-1})(\tau)\|\leq C\varepsilon+C(\beta_n+\beta_{n-1})(\tau).
\end{equation}
It then follows from (\ref{X n+1-X n})-(\ref{tilde Cn}) and (\ref{bound tilde Cn}) that for $n$ sufficiently large,
\begin{equation}
\label{Y n+1-Y n}
\aligned
& |(X_{n+1}-X_n)_x|(s)\leq C\varepsilon+C\int_1^s(\beta_n+\beta_{n-1})(\tau)\ d\tau,
\\
& |(Y_{n+1}-Y_n)_x|(s)\leq C\varepsilon+C\int_1^s(\beta_n+\beta_{n-1})(\tau)\ d\tau.
\endaligned
\end{equation}
On the other hand, taking the spatial derivative in (\ref{eta n_t}), subtracting the resulting equations written for $n+1$ and $n$ gives
\begin{eqnarray*}
\aligned
& \hskip-.3cm (\eta_{n+1}-\eta_{n})_{tx} 
\\
= & -\Lambda t (\eta_{n+1}-\eta_{n})_xe^{2\eta_{n+1}}-\Lambda t(\eta_{n})_x (e^{2\eta_{n+1}}-e^{2\eta_{n}})\\&+4\pi t(\eta_{n+1}-\eta_{n})_xe^{2(\eta_{n+1}-\eta_n)}(X_{n}^2+Y_{n}^2)
-4\pi t(\eta_{n}-\eta_{n-1})_xe^{2(\eta_{n}-\eta_{n-1})}(X_{n-1}^2+Y_{n-1}^2)\\&
+4\pi te^{2(\eta_{n+1}-\eta_n)}\big((X_{n}-X_{n-1})_xX_n+(Y_{n}-Y_{n-1})_xY_n\big)\\&
+4\pi te^{2(\eta_{n+1}-\eta_n)}\big((X_{n-1})_xX_n+(Y_{n-1})_xY_n\big)\\&
-4\pi te^{2(\eta_{n}-\eta_{n-1})}\big((X_{n-1})_xX_{n-1}+(Y_{n-1})_xY_{n-1}\big),
\endaligned
\end{eqnarray*}
from this and the previous lemmas, it follows that, for $n$ sufficiently large,
\begin{equation}
\label{eta n+1-eta n}
|(\eta_{n+1}-\eta_n)_x|(s)\leq C\varepsilon+C\int_1^s(\beta_n+\beta_{n-1})(\tau)\ d\tau.
\end{equation}
Combining (\ref{Y n+1-Y n}) and (\ref{eta n+1-eta n}), and taking the supremum over $s\in[1,t]$ yields, for $n$ sufficiently large,
$$ 
\beta_n(t)\leq C\varepsilon+C\int_1^t(\beta_n+\beta_{n-1})(\tau)\ d\tau,
$$
and by Gronwall's lemma it follows that, for $n$ sufficiently large and $t\in[1,T^{(4)}]$,
$$
\delta_n(t)\leq C\varepsilon,
$$
where $\delta_n(t):=\sup\{\beta_m, m\leq n\}$.
The uniform convergence of $(\eta_n)_x$, $(X_n)_x$, $(Y_n)_x$, $(X_n)_t$ and $(Y_n)_t$ follows.
\end{proof}

Lemmas \ref{conv} and \ref{der-conv} allow us to pass to the limit in (\ref{eta_n}) and (\ref{Euler_n}) and obtain a regular solution $(\eta, X, Y)$ to our system on a time interval $[1,T)$.
It is easily checked that this solution is unique. Namely, 
let $(\eta_i, X_i, Y_i)$, $i=1, 2$, be two regular solutions of the Cauchy problem for the same initial data $(\etab,\Xbar, \Ybar)$ at $t=1$. Using the same argument as in the proof of iterates convergence leads to
$$
 \alpha(t)\leq C\int_1^t\alpha(\tau)\ d\tau,
$$
where $\alpha(t)=\sup\{\|(\eta_1-\eta_2)(s)\|+\|(X_1-X_2)(s)\|+\|(Y_1-Y_2)(s)\| \ ; s\in[1,t]\}$.\\ It follows that $\alpha(t)=0$, for $t\in[1,T)$ i.e. the solution is unique.

We have thus established the existence of a unique, local-in-time 
regular solution $(\eta, \lambda, r, s)$ to the Cauchy problem for the plane symmetric Einstein-stiff fluid equations written in areal coordinates.

%=====================================================================================================

\section{Global existence theory and asymptotics}

\subsection*{Global existence}

We are now in a position to establish the following main result, which takes advantage of our assumption $\Lambda>0$. 

\begin{theorem}[Global existence theory and asymptotics]
Under the assumptions in Theorem~\ref{theo1}, 
the solution constructed therein
is defined up to $T=+\infty$, the spacetime is future geodesically
complete, and the following asymptotic properties hold at late times:
\begin{equation}
\aligned &\eta=-\ln t(1+O\big((\ln t)^{-1})\big),\ \ \lambda=\ln
t(1+O\big((\ln t)^{-1})\big), 
\\
&  r=O(t^{-1}),\qquad s=O(t^{-1}), 
\\
&\eta_t=-\frac{1}{t}(1+O(t^{-1})),\ \
\lambda_t=\frac{1}{t}(1+O(t^{-1})),
\\
& \eta_x=O(1). 
\endaligned
\end{equation}
Consequently, the generalized Kasner exponents associated with this spacetime 
(cf.~\eqref{Kasner-expo}, below) 
tend to $1/3$:
$$
\lim_{t\to\infty}\frac{\kappa^1_1(t,x)}{\kappa(t,x)}=\lim_{t\to\infty}\frac{\kappa^2_2(t,x)}{\kappa(t,x)}=\lim_{t\to\infty}\frac{\kappa^3_3(t,x)}{\kappa(t,x)}=\frac{1}{3},
$$ 
where $\kappa=\kappa^i_i$ denotes the trace of the second fundamental form $\kappa_i^j$.
\end{theorem}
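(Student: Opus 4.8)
The plan is to establish global existence by a continuation argument and then extract the asymptotics from the representation \eqref{eta}. By Theorem~\ref{theo1} the solution exists on a maximal interval $[1,T)$, and I would show that if $T<\infty$ then the $C^1$-norms of $(\eta,\lambda,r,s)$ remain bounded and the right-hand side of \eqref{eta} stays bounded away from zero up to $t=T$; the local theory can then be restarted past $T$, contradicting maximality. The first ingredient is a \emph{global} sup-bound on the fluid. Since $r,s\geq 0$, both terms on the right of \eqref{Euler} are non-positive, so $X$ is non-increasing along the integral curves of $D^+$ and $Y$ along those of $D^-$; evaluating at the foot of the characteristic gives $\|X(t,\cdot)\|\leq\|\Xbar\|$ and $\|Y(t,\cdot)\|\leq\|\Ybar\|$ for every $t$. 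Together with the decay \eqref{bound}, i.e. $r,s=O(t^{-1})$, this already yields two of the claimed asymptotics.

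The delicate point, flagged in the abstract, is to prevent the lapse $e^{2\eta}$ from degenerating, i.e. to keep $W:=t\,e^{-2\eta}$ positive. Differentiating \eqref{eta} gives the pointwise identity $W_t=\Lambda t^2-4\pi t^2(r+s)$, and since $r+s=e^{-2\eta}(X^2+Y^2)=(W/t)(X^2+Y^2)$ this is the linear ODE (in $t$, at fixed $x$)
\begin{equation*}
W_t+4\pi t\,(X^2+Y^2)\,W=\Lambda t^2 ,
\end{equation*}
with positive source and non-negative zeroth-order coefficient; hence $W>0$ for all $t$. Using the global bound $X^2+Y^2\leq\|\Xbar\|^2+\|\Ybar\|^2=:N$ and the integrating factor $e^{2\pi Nt^2}$ one gets $W(t)\geq e^{-2\etab}\,e^{2\pi N(1-t^2)}>0$, so $e^{2\eta}=t/W$ is bounded on every finite $[1,T]$. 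With the lapse controlled, the first-derivative estimates close: $\eta_x$ is bounded directly from the constraint \eqref{1.4}, $(\lambda-\eta)_x$ and $\lambda_x$ by time-integration of \eqref{lambda_t}, and Lemma~\ref{A(t)} furnishes a Gronwall inequality for $A(t)=\sup(|X_x|+|Y_x|)$ whose coefficients are bounded once $e^{2\eta}$ is. Running the coupled Gronwall system exactly as in Lemmas~\ref{bounds}--\ref{der-bounds} bounds all first derivatives on finite intervals, completing the continuation and giving $T=+\infty$.

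For the asymptotics I would return to $W=e^{-2\etab}+\int_1^t\tau^2\big(\Lambda-4\pi(r+s)\big)\,d\tau$: the cosmological term contributes $\tfrac{\Lambda}{3}t^3$, whereas $r+s=O(t^{-1})$ makes the matter integral only $O(t^2)$, so $W=\tfrac{\Lambda}{3}t^3\big(1+O(t^{-1})\big)$ and $e^{-2\eta}=W/t=\tfrac{\Lambda}{3}t^2\big(1+O(t^{-1})\big)$, that is $\eta=-\ln t\,(1+O((\ln t)^{-1}))$. Differentiating the representation gives $\eta_t=-\tfrac1t(1+O(t^{-1}))$; then \eqref{lambda_t} together with $\Lambda t\,e^{2\eta}\sim 3/t$ yields $\lambda_t=\tfrac1t(1+O(t^{-1}))$, whose integral is $\lambda=\ln t\,(1+O((\ln t)^{-1}))$. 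Finally $\eta_x=-4\pi t\,e^{\lambda+\eta}j$ with $|j|\leq\tfrac12(r+s)=O(t^{-1})$ and $e^{\lambda+\eta}=O(1)$ gives $\eta_x=O(1)$. The Kasner statement is then immediate: for \eqref{1.1} one computes $\kappa^1_1=-e^{-\eta}\lambda_t$ and $\kappa^2_2=\kappa^3_3=-e^{-\eta}/t$, so that $\kappa^1_1/\kappa=\lambda_t/(\lambda_t+2/t)$ and $\kappa^2_2/\kappa=\kappa^3_3/\kappa=(1/t)/(\lambda_t+2/t)$, each tending to $1/3$ since $\lambda_t\sim 1/t$.

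For future geodesic completeness I would show that the affine parameter along any future-directed causal geodesic diverges as $t\to\infty$. The normalization reads $e^{2\eta}\dot t^2=\epsilon+e^{2\lambda}\dot x^2+t^{-2}(p_y^2+p_z^2)$, with $\epsilon\in\{0,1\}$ and $p_y,p_z$ the conserved momenta associated with $\partial_y,\partial_z$; feeding the bounds on $\eta_x,\lambda_x$ into the geodesic equation for $x$ controls $e^{2\lambda}\dot x^2$, so that $\dot t=e^{-\eta}\sqrt{\,\cdots\,}\leq C\,t$ for large $t$ (as $e^{-\eta}\sim(\Lambda/3)^{1/2}t$). Then the affine length satisfies $\int dt/\dot t\geq C^{-1}\int dt/t\to\infty$, so no future causal geodesic reaches $t=\infty$ at finite affine parameter and the spacetime is future complete, in the spirit of \cite{TR}. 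I expect the second paragraph to contain the main obstacle — simultaneously keeping the lapse from degenerating and bootstrapping the spatial derivatives — and it is precisely the favorable sign in the $W$-equation that makes the estimates close.
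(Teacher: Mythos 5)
Your proposal is correct in substance, and for the heart of the matter --- global existence --- it takes a genuinely different route from the paper, so a comparison is worthwhile. The paper's key step (Lemma~\ref{global}) is an energy-type argument: it proves the integral estimate $\int_0^1 e^{\eta+\lambda}\rho\,dx\leq Ct^{-4}$ from the field equations and periodicity, uses the constraint \eqref{1.4} to bound the spatial oscillation of $\eta$, and controls $\int_0^1\eta\,dx$ by integrating \eqref{1.3}; this yields the sharp bound $e^{2\eta}\leq Ct^{-2}$ on the maximal interval, and Lemma~\ref{criterion} then gives $T=+\infty$. You instead argue pointwise in $x$: setting $W=te^{-2\eta}$ and using $r+s=e^{-2\eta}(X^2+Y^2)=(W/t)(X^2+Y^2)$, the representation \eqref{eta} becomes the linear ODE $W_t+4\pi t(X^2+Y^2)W=\Lambda t^2$, with non-negative coefficient and positive source; your sup-bound $X^2+Y^2\leq\|\Xbar\|^2+\|\Ybar\|^2=:N$ is legitimate (both right-hand sides of \eqref{Euler} are non-positive since $X,Y\geq 0$, so $X$ and $Y$ decrease along their respective characteristics), and the integrating factor gives $W\geq e^{-2\etab}e^{-2\pi N(t^2-1)}>0$. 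This is the decisive observation: the dangerous matter term in \eqref{eta} is proportional to $W$ itself, so it can make the lapse decay at worst exponentially but never degenerate in finite time. Your bound is much weaker than the paper's ($e^{2\eta}\lesssim t\,e^{2\pi Nt^2}$ versus $Ct^{-2}$), but that costs nothing: it suffices for the continuation criterion, and once $T=+\infty$ the sharp behavior $te^{-2\eta}=\tfrac{\Lambda}{3}t^3\big(1+O(t^{-1})\big)$ is recovered, exactly as you do, from \eqref{eta} and the decay \eqref{bound} alone (which holds independently of any bound on $\eta$) --- and this is in fact how the paper itself derives the expansions. From there your treatment of $\eta,\eta_t,\lambda,\lambda_t,\eta_x$ and of the Kasner limits coincides with the paper's.

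The one place where your sketch hides a real difficulty is geodesic completeness, in the claim that ``feeding the bounds on $\eta_x,\lambda_x$ into the geodesic equation for $x$ controls $e^{2\lambda}\dot x^2$.'' In coordinate time the equation for $w=e^\lambda\dot x$ reads (cf.~\eqref{d w})
\[
\frac{dw}{ds}=-\lambda_t w-e^{\eta-\lambda}\eta_x\sqrt{m^2+w^2+F/s^2},
\]
and a naive absolute-value Gronwall estimate, using $|\lambda_t|\leq Cs^{-1}$ and $e^{\eta-\lambda}\eta_x=O(s^{-2})$, only gives $|w|\lesssim s$ (the Gronwall factor is $e^{\int\lambda_t\,ds}\approx e^{\lambda}\approx s$), hence $\dot t=e^{-\eta}\sqrt{m^2+w^2+F/s^2}\lesssim s^2$ and a proper-time integral $\int dt/\dot t$ that may converge, so completeness would not follow. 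What closes the estimate is the sign of the damping term: since $\lambda_t=\tfrac1s\big(1+O(s^{-1})\big)>0$ for large $s$, on any interval where $w>0$ one gets $dw/ds\leq Cs^{-2}$ (and symmetrically for $w<0$), which integrates to $|w|\leq C$; this is precisely the case distinction $w>0$ / $w<0$ that the paper carries out, using the field equations to display the favourable signs. With that point made explicit, the rest of your completeness argument --- conserved $F$, $e^{\eta}\geq Ct^{-1}$, $d\tau/dt\geq Ct^{-1}(m^2+C+F)^{-1/2}$, divergent integral --- is the paper's.
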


In particular, this shows that the spacetime approaches the {\sl de Sitter spacetime} 
asymptotically. To establish this global result, we begin with a continuation criterion, based on 
the same notation as in the previous section.

\begin{lemma}\label{criterion}
Let $[1,T)$ be the maximal interval of existence of solutions to the system under consideration. 
 If $\sup\{|\eta(t,x)| \ | x\in[0,1], t\in[1,T)\}<+\infty$ then $T=+\infty$.
\end{lemma}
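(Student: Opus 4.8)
The plan is to establish a continuation criterion by assuming a uniform bound on $|\eta|$ over $[1,T)\times[0,1]$ and showing that this forces all the quantities controlled in the local-existence lemmas (Lemmas~\ref{bounds} and~\ref{der-bounds}) to remain bounded up to $T$, so that the solution extends past $T$, contradicting maximality unless $T=+\infty$. Concretely, suppose $\sup|\eta|\leq M<\infty$. The first step is to note that this immediately yields $e^{2\eta}\leq e^{2M}$, and via the representation \eqref{eta} together with the lower bound $e^{-2\eta_{n-1}}\leq C(1+\Lambda)t^2$ used in the local theory, a matching lower bound $e^{-2\eta}\geq c>0$ on any compact time subinterval. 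Combined with the decay estimate \eqref{bound}, namely $r\leq \rbar\,t^{-1}$ and $s\leq \sbar\,t^{-1}$, this gives immediate uniform bounds on $r$, $s$, and hence on $X=e^\eta\sqrt r$ and $Y=e^\eta\sqrt s$ throughout $[1,T)$.

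Next I would recover the time derivative $\eta_t$ from \eqref{eta n_t} (or its limiting analogue, the equation for $\eta_t$ obtained from \eqref{eta}), which is an algebraic expression in $e^{2\eta}$, $t$, $X$, and $Y$; all of these are now bounded, so $\eta_t$ is bounded. The heart of the argument is then to bound the spatial derivatives. Using the constraint \eqref{1.4}, or rather its consequence controlling $\eta_x$ through $j$ and hence through $r-s$, together with the bound on $(\lambda-\eta)_x$ expressed as a time integral of $\eta_x e^{2\eta}$, I would set up the integral inequality of Lemma~\ref{A(t)}. With $K(t)$ now known to be bounded (from the previous step) and the coefficients $v(t)$ and $h(t)$ controlled in terms of $e^{2\eta}$, $\eta_x$, and $(\lambda-\eta)_x$, estimate \eqref{A-bound} becomes a linear Gronwall inequality for $A(t)=\sup(|X_x|+|Y_x|)$ coupled to $B(t)=\sup|(e^{-2\eta})_x|$. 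Closing this coupled system by Gronwall's lemma yields uniform bounds on $X_x$, $Y_x$, and $\eta_x$ on $[1,T)$, and then on $r_x$, $s_x$, $\eta_{tx}$ via the product-rule estimates already used in Lemma~\ref{der-bounds}.

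The main obstacle is the feedback loop between $\eta_x$ and the fluid derivatives: $\eta_x$ is controlled by $r-s$ through the constraint, while the evolution of $X_x,Y_x$ in \eqref{Euler_x} involves $(\lambda-\eta)_x$ and $\eta_x$ as coefficients, so the derivative estimates are genuinely coupled rather than sequential. The resolution is that once $K(t)$ and $e^{2\eta}$ are bounded by the hypothesis, every coefficient $v(t)$, $h(t)$ in \eqref{A-bound} grows at most linearly in the as-yet-unbounded derivative quantities, so the combined inequality for $A(t)+B(t)$ is \emph{linear} (unlike the quadratic inequality \eqref{En-bound} of the local theory, whose solution can blow up in finite time). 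A linear integral inequality has no finite-time blow-up, so all derivatives stay bounded up to any finite $T$. Having obtained uniform $C^1$ bounds on $(\eta,\lambda,r,s)$ on $[1,T)$, the final step is to invoke the local existence theorem with data at a time close to $T$ to extend the solution beyond $T$; since $T$ was assumed maximal, this is impossible unless $T=+\infty$.
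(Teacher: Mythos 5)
Your outline follows the paper's proof in its main lines: bound $X,Y$ from the decay estimate \eqref{bound} plus the hypothesis on $\eta$, bound $\eta_t$ algebraically, control $\eta_x$ and $(\lambda-\eta)_x$, run Gronwall on \eqref{A-bound}, and extend past $T$ by local existence. But your third paragraph misjustifies exactly the crucial step. You claim that because the coefficients $v(t)$, $h(t)$ ``grow at most linearly in the as-yet-unbounded derivative quantities,'' the combined inequality for $A(t)+B(t)$ is \emph{linear} and hence cannot blow up. That inference is false: a coefficient that is linear in the unknown $B$ multiplying the unknown $A$ produces a \emph{quadratic} term. Concretely, if you keep $B(t)=\sup_x|(e^{-2\eta})_x|$ as an unknown, then $|\eta_x|\lesssim B$, hence $h(\tau)\lesssim B(\tau)$ and $v(\tau)\lesssim 1+\int_1^\tau B(\sigma)\,d\sigma$, and the right-hand side of \eqref{A-bound} contains
\begin{equation*}
\int_1^t \Bigl(\int_1^\tau B(\sigma)\,d\sigma\Bigr) A(\tau)\,d\tau ,
\end{equation*}
which has the same quadratic structure as \eqref{En-bound} in the local theory --- precisely the kind of inequality (cf.\ $z'=Cz^2$) whose solutions can blow up before a prescribed finite $T$. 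So the ``coupled linear system'' you invoke does not exist as described, and the argument as written does not close.

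The fix --- and this is what the paper does --- is to observe that there is no feedback loop at all: the constraint \eqref{1.4} bounds $\eta_x$ \emph{pointwise} by already-controlled quantities, with no derivative of the unknowns entering. Indeed $|\eta_x|=4\pi t\,e^{\lambda+\eta}|j|$ with $|j|\leq\rho=(r+s)/2\leq Ct^{-1}$ by \eqref{bound}, $e^{\eta}\leq C$ by hypothesis, and $e^{\lambda}\leq C$ on $[1,T)$ when $T<\infty$. This last bound on $\lambda$ is a step you skipped entirely; it follows from the representation $\lambda=(\lambdab-\etab)+\eta-\ln t+\Lambda\int_1^t\tau e^{2\eta}\,d\tau$, which also provides the lower bound on $\lambda$ needed to control $e^{\eta-\lambda}$ inside $v(t)$. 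Once $\eta_x$ is bounded, $(\lambda-\eta)_x=(\lambdab-\etab)_x+2\Lambda\int_1^t\tau\,(\eta_x e^{2\eta})\,d\tau$ is bounded, hence $v$ and $h$ are bounded \emph{functions of $t$}, and \eqref{A-bound} becomes a genuinely linear Gronwall inequality in $A$ alone; the derivative estimates are sequential, not coupled, and no quantity $B(t)$ is needed. With this correction --- which uses only ingredients you already named but assembles them in the right order --- your proof coincides with the paper's.
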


\begin{proof}
It suffices to prove that under the assumption that $\eta$ is bounded on $[1,T)$, the same is true for $\eta_x$, $\eta_t$, $X$, $Y$, $X_x$, $Y_x$, $X_t$, and $Y_t$. 
First of all, by definition we have $X=e^\eta\sqrt{r}$ and $Y=e^\eta\sqrt{s}$
and it follows from the decay inequalities (\ref{bound}) that $X$ and $Y$ are bounded.
Next, recalling that $\eta_x=-2\pi te^{\eta+\lambda}(r-s)$ and $\eta_t=\frac{1}{2t}+2\pi t(X^2+Y^2)-\frac{\Lambda t}{2}e^{2\eta}$, we find that
 $\eta_x$ and $\eta_t$ are bounded as well. Here, we have used the fact that
$$
\lambda (t,x)=(\lambdab-\etab)(x)+\eta(t,x)-\ln t+\Lambda\int_1^t\tau e^{2\eta}(\tau,x)\ d\tau.
$$
Taking the spatial derivative in this equation implies
$$
(\lambda -\eta)_x(t,x)=(\lambdab-\etab)(x)+2\Lambda\int_1^t\tau (\eta_x e^{2\eta})(\tau,x)\ d\tau,
$$
so that $v(t)$, defined in Lemma~\ref{A(t)}, is bounded. Rewriting (\ref{A-bound})
$$
A(t)\leq A(1)+\int_1^t\big(v(\tau)A(\tau)+h(\tau)K(\tau)\big)\ d\tau,
$$
and using the fact that $h$ and $K$ are bounded, Gronwall's lemma
allows us to conclude that $A$, and then $X_x$ and $Y_x$ are
bounded. Bounds on $X_t$ and $Y_t$ then follow from (\ref{Euler}).
\end{proof}

We now prove that $\eta$ is bounded in order to conclude that $T=+\infty$.

\begin{lemma}\label{global}
The function $\eta$ satisfies
$$
\sup\big\{|\eta(t,x)| \ \, / \,  x\in[0,1], t\in[1,T) \big\}<+\infty.
$$
\end{lemma}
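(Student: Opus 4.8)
The plan is to control $\eta$ entirely through the explicit representation (\ref{eta}). Since $e^{-2\eta}>0$, the bound $\sup_{[1,T)}|\eta|<+\infty$ is equivalent to two one-sided bounds: an upper bound on $e^{-2\eta}$ (which keeps $\eta$ bounded below) and a strictly positive lower bound on $e^{-2\eta}$ (which keeps $\eta$ bounded above). In fact it suffices to bound $e^{-2\eta}$ above by a continuous function of $t$ and below by a strictly positive continuous function of $t$, since these stay finite and positive on any bounded interval $[1,T)$; feeding this information into the continuation criterion of Lemma~\ref{criterion} then yields $T=+\infty$.

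The crucial preliminary step, and the one that makes the argument close, is a uniform bound on $X$ and $Y$ themselves (not merely on $r$ and $s$). Indeed, from the stiff fluid equations (\ref{Euler}), the nonnegativity of $X=e^{\eta}\sqrt{r}$ and $Y=e^{\eta}\sqrt{s}$ together with $\Lambda\geq 0$ give $D^+X=-\Lambda te^{2\eta}X-\tfrac1t Y\leq 0$ and $D^-Y=-\tfrac1t X-\Lambda te^{2\eta}Y\leq 0$. Thus $X$ is nonincreasing along the $+$characteristics and $Y$ along the $-$characteristics; integrating from $t=1$ (and using spatial periodicity, so that every point of the slab is reached by a characteristic issuing from the initial hypersurface) gives $0\leq X\leq\|\Xbar\|$ and $0\leq Y\leq\|\Ybar\|$ throughout $[1,T)$.

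With this in hand the upper bound on $e^{-2\eta}$ is immediate from (\ref{eta}): dropping the nonnegative matter contribution $-4\pi(r+s)\leq 0$ gives $e^{-2\eta}\leq t^{-1}\|e^{-2\etab}\|+\Lambda t^{-1}\int_1^t\tau^2\,d\tau\leq\|e^{-2\etab}\|+\tfrac{\Lambda}{3}t^2$. For the positive lower bound I would set $w:=te^{-2\eta}$, so that differentiating (\ref{eta}) yields $\partial_t w=t^2\big(\Lambda-4\pi(r+s)\big)$; then, using the identity $r+s=e^{-2\eta}(X^2+Y^2)=t^{-1}w\,(X^2+Y^2)$ and the uniform bound $X^2+Y^2\leq M:=\|\Xbar\|^2+\|\Ybar\|^2$, this becomes the linear differential inequality $\partial_t w=\Lambda t^2-4\pi t(X^2+Y^2)w\geq-4\pi M\,t\,w$. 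Since $w>0$ on $[1,T)$, the integrating factor $e^{2\pi M t^2}$ (equivalently Gronwall's lemma) gives $w(t,x)\geq\big(\min_x e^{-2\etab(x)}\big)\,e^{-2\pi M(t^2-1)}>0$, so that $e^{-2\eta}=w/t$ is bounded below by a strictly positive continuous function of $t$.

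I expect the lower bound on $e^{-2\eta}$ — equivalently, ruling out $\eta\to+\infty$ (the degeneracy $e^{-2\eta}\to 0$) in finite time — to be the main obstacle. The naive route of inserting only the decay estimate (\ref{bound}), $r+s\lesssim t^{-1}$, into (\ref{eta}) fails, because that estimate is too weak near $t=1$ to keep the matter term from formally pushing $e^{-2\eta}$ toward zero. The key is that replacing $r,s$ by the genuinely uniform quantities $X,Y$ recasts the evolution of $w$ as a linear ODE with a controlled coefficient, for which strict positivity is automatically preserved. Once both one-sided bounds are in place, $\sup_{[1,T)}|\eta|<+\infty$ on every bounded interval, and Lemma~\ref{criterion} forces $T=+\infty$, completing the proof.
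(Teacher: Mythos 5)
Your proof is correct, but it reaches the crucial bound --- the strictly positive lower bound on $e^{-2\eta}$, i.e.\ the upper bound on $\eta$ --- by a genuinely different and more elementary route than the paper. The paper's argument is integral in nature: it first proves the decay estimate $\int_0^1 e^{\eta+\lambda}\rho\,dx\leq Ct^{-4}$ by differentiating this quantity in time, killing the flux term $\big(e^{2\eta}j\big)_x$ by spatial periodicity, and exploiting the crude lower bound $e^{2\eta}\geq t/(C+\Lambda t^3)$; it then controls the spatial oscillation of $\eta$ through the momentum constraint $\eta_x=-4\pi t e^{\eta+\lambda}j$ together with $|j|\leq\rho$, and finally bounds the spatial mean $\int_0^1\eta\,dx$ by integrating (\ref{1.3}) in time; the outcome is the sharp decaying bound $e^{2\eta}\leq Ct^{-2}$. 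Your argument is instead purely pointwise in $x$: the sign structure of (\ref{Euler}) gives $D^+X\leq 0$ and $D^-Y\leq 0$, hence the unconditional bounds $X\leq\|\Xbar\|$, $Y\leq\|\Ybar\|$ (a genuine strengthening of the paper's remark that $X,Y$ are bounded \emph{provided} $\eta$ is), and then the substitution $r+s=t^{-1}w\,(X^2+Y^2)$ with $w=te^{-2\eta}$ turns (\ref{eta}) into the linear inequality $\partial_t w\geq -4\pi M t w$, so Gronwall preserves positivity of $w$. Your diagnosis is exactly right: the degeneracy $e^{-2\eta}\to 0$ cannot occur because the matter term in (\ref{eta}) is proportional to $w$ itself, whereas inserting only (\ref{bound}) fails without smallness of the data. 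Two remarks on the comparison. First, both proofs rest on the same standing convention of the paper that $r,s\geq 0$ with $X=e^\eta\sqrt{r}\geq0$, $Y=e^\eta\sqrt{s}\geq0$ (the system (\ref{Euler}) alone does not propagate nonnegativity at a vacuum boundary where $X=0$, $Y>0$); the paper needs this equally to derive (\ref{bound}) and to discard the $\mu$-term and use $|j|\leq\rho$ in its own proof, so this is a shared hypothesis, not a gap specific to your argument. Second, the trade-off is quantitative: you obtain only $e^{2\eta}\lesssim t\,e^{2\pi M(t^2-1)}$, which is amply sufficient to feed Lemma~\ref{criterion} on a putative finite maximal interval (which is how the lemma's statement must be read in either proof, since for $T=+\infty$ the quantity $\eta\sim-\ln t$ is in any case unbounded), but it carries no late-time information; the paper's $e^{2\eta}\leq Ct^{-2}$ already displays the de~Sitter-type decay. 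Since the asymptotics section rederives the precise behavior directly from (\ref{eta}) and (\ref{bound}) once $T=+\infty$ is known, your weaker bound does not compromise anything downstream.
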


\begin{proof}
We can deduce from (\ref{eta}) that $e^{-2\eta(t,x)}\leq \frac{e^{-2\etab(x)}+\Lambda t^3}{t}$, i.e.
\begin{equation}\label{e 2eta}
e^{2\eta(t,x)}\geq \frac{t}{C+\Lambda t^3},
\end{equation}
which provides a (negative, say) lower bound on $\eta$. 
Now, let us prove that
\begin{equation}\label{e eta+lambda}
\int_0^1(e^{\eta+\lambda}\rho)(t,x)\ dx\leq Ct^{-4}, \ t\in[1,T), \ x\in[0,1],
\end{equation}
which will eventually lead us to an upper bound for  $\eta$. 

Using the equations (\ref{1.2}), (\ref{1.3}), and (\ref{1.6}), 
 after some computations we find  
\begin{align*}
\frac{d}{dt}\Bigg(\int_0^1(e^{\eta+\lambda}\rho)(t,x)\ dx\Bigg)
=&\int_0^1e^{\eta+\lambda}\rho\big(-\frac{1}{t}-\Lambda t e^{2\eta}\big)\ dx-\int_0^1e^{2\eta}\big(j_x+2\eta_x j\big)\ dx\\&-\int_0^1\frac{2}{t}e^{\eta+\lambda}\mu\ dx.
\end{align*}
Since $\mu \geq 0$ and 
$$
\int_0^1e^{2\eta}\big(j_x+2\eta_x j\big)\ dx=\int_0^1\big(e^{2\eta} j\big)_x\ dx=0,
$$
it follows that
\begin{equation}\label{rho}
\frac{d}{dt}\Bigg(\int_0^1(e^{\eta+\lambda}\rho)(t,x)\ dx\Bigg)
\leq
\frac{1}{t}\int_0^1e^{\eta+\lambda}\rho\big(-1-\Lambda t^2 e^{2\eta}\big)\ dx.
\end{equation}
Thanks to (\ref{e 2eta}), we have
 $$-\Lambda t^2e^{2\eta}\leq\frac{-\Lambda t^3}{C+\frac{\Lambda}{3}t^3}\leq-3+\frac{9C}{\Lambda}t^{-3},$$ so that (\ref{rho}) implies
\begin{align*}
\frac{d}{dt}
\Bigg( t^4\int_0^1(e^{\eta+\lambda}\rho)(t,x)\ dx\Bigg)
& =4t^3\int_0^1(e^{\eta+\lambda}\rho)(t,x)\ dx+t^4\frac{d}{dt}\Bigg(\int_0^1(e^{\eta+\lambda}\rho)(t,x)\ dx\Bigg)
\\
&\leq4t^3\int_0^1(e^{\eta+\lambda}\rho)(t,x)\ dx+t^3
\int_0^1e^{\eta+\lambda}\rho\big(-4+\frac{9C}{\Lambda t^3}\big)\ dx\\
&\leq\frac{9C}{\Lambda t^4}\, t^4\int_0^1(e^{\eta+\lambda}\rho)(t,x)\ dx,
\end{align*}
from which we deduce (\ref{e eta+lambda}) by integration.

We are now in a position to make use of the integral estimate (\ref{e eta+lambda}).
Recalling that $\eta_x=-4\pi te^{\eta+\lambda}j$ and $0\leq j\leq \rho$, we control 
the spatial oscillation of $\eta$ at each time, as follows: 
\begin{align*}
\Bigg| 
\eta(t,x)-\int_0^1\eta(t,\tau)\ d\tau \Bigg|
&=\Bigg| \int_0^1\int_\tau^x\eta_x(t,\sigma)\ d\sigma\ d\tau \Bigg|
\leq
\int_0^1\int_0^1|\eta_x(t,\sigma)|\ d\sigma\ d\tau\\&
\leq 4\pi t\int_0^1(e^{\eta+\lambda}j)(t,\sigma)\ d\sigma
\leq4\pi t\int_0^1(e^{\eta+\lambda}\rho)(t,\sigma)\ d\sigma,
\end{align*}
and thanks to (\ref{e eta+lambda}), this implies that
\begin{equation}\label{eta-int}
\Big|
\eta(t,x)-\int_0^1\eta(t,\tau) \, d\tau \Big|
\leq Ct^{-3}, \qquad t\in[1,T), \ x\in[0,1].
\end{equation}

We will have the desired upper bound on $\eta$, provided we can control its integral. 
Recalling that $\eta_t-\lambda_t=\frac{1}{t}-\Lambda t e^{2\eta}$ and using (\ref{e 2eta}) gives
$$
\frac{\partial}{\partial t}e^{\eta-\lambda}=(\eta_t-\lambda_t)e^{\eta-\lambda}\leq e^{\eta-\lambda}\Bigg(\frac{1}{t}-\frac{\Lambda t^2}{C+\frac{\Lambda}{3}t^3}\Bigg)
$$
and, after integration,
\begin{equation}\label{eta-lambda}
e^{\eta-\lambda}\leq 
C\, \frac{t}{C+\frac{\Lambda}{3}t^3}
\leq Ct^{-2}, \qquad t\in[1,T), \ x\in[0,1].
\end{equation}
Next, using (\ref{1.3}), (\ref{e 2eta}), (\ref{e eta+lambda}), and (\ref{eta-lambda}), we have 
\begin{align*}
\int_0^1\eta(t,x)\ dx
&=\int_0^1\etab(x)\ dx+\int_1^t\int_0^1\eta_t(s,x)\ dxds
\\
&\leq C+\int_1^t\frac{1}{2s}\int_0^1\big(1+e^{2\eta}(8\pi s^2\rho-\Lambda s^2)\big)\ dxds
\\
& \leq C+\frac{1}{2}\ln t+4\pi\int_1^t\int_0^1 s e^{\eta-\lambda}e^{\eta+\lambda}\rho\ dxds-\int_1^t\int_0^1  \frac{\Lambda}{2}s e^{2\eta}\ dxds, 
\end{align*}
thus 
\begin{align*}
\int_0^1\eta(t,x)\ dx
&\leq C+\frac{1}{2}\ln t+C\int_1^ts^{-5}\ ds-\frac{1}{2}\int_1^t\frac{\Lambda s^2}{C+\frac{\Lambda}{3}s^3}\ ds\\&\leq C+\frac{1}{2}\ln \Bigg(\frac{\Lambda t}{C+\frac{\Lambda}{3}t^3}\Bigg).
\end{align*}

It then follows from (\ref{eta-int}) that
$$
\eta(t,x)\leq C(1+t^{-3})+\frac{1}{2}\ln \Bigg(\frac{\Lambda t}{C+\frac{\Lambda}{3}t^3}\Bigg),
$$
which leads to an upper bound for $\eta$, i.e. 
$$
e^{2\eta(t,x)}\leq Ct^{-2}, \qquad t\in[1,T), \ x\in[0,1],
$$
and the proof is complete.
\end{proof}

%---------------------------------------------------------------------------------------------------------------

\subsection*{Late-time asymptotics}

We determine now
 the explicit leading asymptotic behavior of $r$, $s$, $\eta$, $\lambda$, $\lambda_t$, $\eta_t$ and $\eta_x$,
and then check that each of the generalized Kasner exponents tends to $1/3$.  
We have proven that (see equation (\ref{bound}))
\begin{equation}\label{r s}
 r=O(t^{-1}), \qquad s=O(t^{-1}).
\end{equation}
and the equation (\ref{eta}) implies 
\[
(t e^{-2\eta})_t=\Lambda t^2-4\pi t^2(r+s).
\]
Integrating over $[1,t]$ and using (\ref{r s}), we obtain 
\[
\Big| t e^{-2\eta}-\frac{\Lambda}{3}t^3 \Big| \leq Ct^2,
\]
that is, 
$e^{-2\eta}=(\Lambda/3)t^2(1+O(t^{-1}))$, 
so that
$$ 
e^{\eta}=\sqrt{\frac{3}{\Lambda}}t^{-1}(1+O(t^{-1})).
$$ 
In view of $\eta_t= (1/2t) - (\Lambda/2) t e^{2\eta}+2\pi t e^{2\eta}(r+s)$, one has 
\begin{equation}\label{eta t decay}
\eta_t=-\frac{1}{t}(1+O(t^{-1})),
\end{equation}
and, after integration over $[1,t]$,
$\eta=-\ln t(1+O\big((\ln t)^{-1})\big)$.  

Since $\lambda_t=\eta_t+\Lambda t e^{2\eta}- (1/t)$, one also has 
\begin{equation}\label{lambda t decay}
\lambda_t=\frac{1}{t}(1+O(t^{-1})),
\end{equation}
and integrating over$[1,t]$ gives
$\lambda=\ln t(1+O\big((\ln t)^{-1})\big)$. 
This implies $e^\lambda=O(t)$, 
and recalling that $\eta_x=-2\pi t e^{\lambda+\eta}(r-s)$ one deduces that
\begin{equation}\label{eta x decay}
 \eta_x=O(1).
\end{equation}

Consider the generalized Kasner exponents which take the following form 
for the metric under consideration
 (see for instance \cite{rein}): 
\be
\label{Kasner-expo}
\frac{\kappa^1_1(t,x)}{\kappa(t,x)}=\frac{t\lambda_t}{t\lambda_t+2}, 
\qquad
 \frac{\kappa^2_2(t,x)}{\kappa(t,x)}=\frac{\kappa^3_3(t,x)}{\kappa(t,x)}=\frac{1}{t\lambda_t+2},
\end{equation}
 where $\kappa(t,x)=\kappa^i_i(t,x)$ is the trace of the second fundamental form $\kappa_{ij}(t,x)$ of the metric. It follows from (\ref{lambda t decay}) that as $t$ tends to $\infty$, each of these quantities tends to $1/3$, uniformly in $x$.

%---------------------------------------------------------------------------------------------------------------

\subsection*{Future geodesic completeness}

The late-time asymptotic expansion above allows us to establish that the spacetime is future geodesically complete, as follows. 
Let $\tau\mapsto \big(\gamma^\alpha\big)(\tau)$ (with $t=\gamma^0(\tau)$)
be a future directed causal geodesic 
defined on an interval $[1,\tau_+)$ with $\tau_+$ maximal, 
and 
normalized so that 
$\gamma^0(\tau_0)=t(\tau_0)=1$ for some $\tau_0\in[1,\tau_+)$. 
We are going to prove that $\tau_+=+\infty$. 

Since $\gamma$ is causal and future directed, we have 
$$
 g_{\alpha\beta}\gamma_\tau^\alpha\gamma_\tau^\beta=-m^2,
\qquad
\gamma_\tau^0>0,
$$
where $m=0$ if $\gamma$ is null, and $m\neq0$ if $\gamma$ is timelike.
Since $\frac{dt}{d\tau}=\gamma^0_\tau>0$, the geodesic can be parametrized by the coordinate time $t$. With respect to 
this coordinate time the geodesic exists on the whole interval $[1,+\infty)$ 
since on each bounded interval of $t$ the Christoffel symbols are bounded and the right-hand sides of the geodesic equation
(written in coordinate time) are linearly bounded in $\gamma^1_\tau$, $\gamma^2_\tau$, $\gamma^3_\tau$.

Along the geodesic we define 
$$
w:=e^\lambda \gamma^1_\tau, 
\qquad
F:=t^4 \, \Big((\gamma^2_\tau)^2+(\gamma^2_\tau)^3\Big). 
$$
Using the geodesic equation it is easily checked that
$$
 \frac{dw}{d\tau}=-\lambda_t\gamma^0_\tau w-e^{2\eta-\lambda}\eta_x(\gamma_\tau^0)^2, 
\qquad  \frac{dF}{d\tau}=0.
$$
The relation between coordinate time and proper time is then given by
\begin{equation}\label{proper}
 \frac{d\tau}{dt}=(\gamma^0_\tau)^{-1}=\frac{e^{\eta}}{\sqrt{m^2+w^2+F/t^2}}.
\end{equation}
We will now exhibit a lower bound for $d\tau/dt$ by a function with divergent integral on $[1, +\infty)$ 
and, to this end, 
an estimate on $w$ as a function of the coordinate time is needed.

Assume that $w(t)>0$ for some $t\geq 1$. Then, as long as $w(s)>0$, we have
\begin{align}\label{d w}
 \frac{dw}{ds}&=-\lambda_t w-e^{\eta-\lambda}\eta_x\sqrt{m^2+w^2+F/s^2}\notag \\
&=4\pi se^{2\eta}(j\sqrt{m^2+w^2+F/s^2}-\rho w)+\frac{1}{2t}w-\frac{\Lambda}{2}se^{2\eta}w.
\end{align}
Using the elementary inequality $\sqrt{a+b}\leq\sqrt{a}+\sqrt{b}$ and the equation (\ref{e 2eta}), we obtain 
\begin{align*}
\frac{dw}{ds}\leq4\pi se^{2\eta}(|j|-\rho)w+\frac{1-\Lambda s^2 e^{2\eta}}{2s}w+4\pi se^{2\eta}|j|\sqrt{m^2+F/s^2}.
\end{align*}
We can drop the first two terms which are negative since
 $|j|\leq\rho$ and $$1-\Lambda s^2 e^{2\eta}\leq\frac{C}{\Lambda}s^{-3}-2<0, 
\qquad
 s \ \rm{ sufficiently large},$$ and we estimate the third term by $C s^{-2}$
 (since $|j|\leq Cs^{-1}$ and $e^{2\eta}\leq Cs^{-2}$). It then follows that
\begin{equation}\label{dw ds}
 \frac{dw}{ds}\leq Cs^{-2}.
\end{equation}

Let $t_0\in[1,t)$ be the smallest time such that $w(s)>0$ for all $s\in[t_0,t)$. Then
integrating (\ref{dw ds}) over $[t_0,t]$ gives $$w(t)\leq C.$$
For the case $w(t)<0$,  it follows from (\ref{d w}) that, as long as $w(s)<0$
\begin{align*}
\frac{dw}{ds}&\geq4\pi se^{2\eta}(-\rho\sqrt{m^2+w^2+F/s^2}-\rho w)+\frac{1-\Lambda s^2e^{2\eta}}{2s}w\\
&\geq-4\pi se^{2\eta}\rho\sqrt{m^2+F/s^2}+8\pi se^{2\eta}\rho w\\
&\geq Cs^{-2}(-1+w),
\end{align*}
we have used the fact that $|j|\leq \rho$, $\frac{1-\Lambda s^2e^{2\eta}}{2s}<0$ for large $s$ and the elementary inequality $\sqrt{a+b}\leq\sqrt{a}+\sqrt{b}$.
Therefore we have 
\begin{equation}\label{dw1 ds}
 \frac{1}{1-w}\frac{d(1-w)}{ds}\leq C s^{-2}.
\end{equation}
Let $t_1\in[1,t)$ be the smallest time such that $w(s)<0$ for all $s\in[t_1,t)$. Then
integrating (\ref{dw1 ds}) over $[t_1,t]$ implies $$-w(t)\leq C.$$ 

In either case, we arrive at
$$
|w(t)|\leq C,  \qquad t\geq 1.
$$
On the other hand equation (\ref{e 2eta}) implies that 
$$
e^\eta\geq C t^{-1}, \qquad t\geq 1,
$$
so we then deduce from (\ref{proper}) that
$$
 \frac{d\tau}{dt}\geq\frac{Ct^{-1}}{\sqrt{m^2+C+F}},
$$
and since the integral of the right-hand side over $[1,+\infty)$ diverges, it follows that $\tau_+=+\infty$
and the proof of future geodesic completeness is completed.

%===============================================================================================================

\section*{Acknowledgments}

This work was completed when the first author (PLF) gave a short course at the thirteen 
GIRAGA seminar hold at the University of Yaounde 
in September 2010; he is particularly grateful to D. B\'ekoll\'e and the organizing committee
for their invitation and warm welcome.  PLF was supported by the Centre National de la Recherche Scientific and 
the Agence Nationale de la Recherche (ANR) through Grant 06-2-134423: 
``Mathematical Methods in General Relativity''.   

%==========================================================================================================


\begin{thebibliography}{10}
 
  
\bibitem{LeFloch} P.G. LeFloch, {\em Einstein spacetimes with weak regularity,} 
in ``Advances in Lorentzian Geometry'', Ed. M. Plaue, A.D.  Rendall, and M. Scherfner, 
AMS/IP series, Vol. 49, 2011, pp.~81--96. 

\bibitem{LS} P.G. LeFloch and J.M. Stewart, 
{\em The characteristic initial value problem for plane-symmetric spacetimes with 
weak regularity,} Class. Quantum Grav. (2011). See also ArXiv:1004.2343.

\bibitem{rein} G. Rein,
{\em Cosmological solutions of the Vlasov-Einstein system with spherical, plane and hyperbolic symmetry}, 
Math. Proc. Camb. Phil. Soc. 119 (1996), 739--762.

\bibitem{RS} A.D. Rendall and F. Stahl,
{\em Shock waves in plane symmetric spacetimes}, 
Commun. Partial Differential Equations 33 (2008), 2020--2039.

\bibitem{RoS} I. Rodnianski and J. Speck,
{\em The stability of the irrotational Euler-Einstein system with a positive cosmological constant},
Preprint ArXiv:0911.5501.

\bibitem{speck} J. Speck,  
{\em Well-posedness for the Euler-Nordstr\"om system with cosmological constant,} 
J. Hyperbolic Differ. Equ. 6 (2009), 313--358.  

\bibitem{speck2} J. Speck,  
{\em The nonlinear future-stability of the FLRW family of solutions to the Euler-Einstein system with a positive cosmological constant,} Preprint ArXiv:1102.1501.

\bibitem{TT} R. Tabensky and A.H. Taub, 
{\em Plane symmetric self-gravitating fluids with pressure equal to energy density},
Commun. Math. Phys. 29 (1973), 61--77.

%\bibitem{taub} A.H. Taub, {\em Relativistic Rankine-Hugoniot equations},
%Phys. Rev. 74 (1948), 328--334.
%
%\bibitem{taub1} A.H. Taub, 
%{\em General relativistic shock waves in fluids for which pressure equals energy density},
%Commun. Math. Phys. 29 (1973), 79--88.

\bibitem{TR} S.B. Tchapnda and A.D. Rendall, 
{\em Global existence and asymptotic behavior in the future for the Einstein-Vlasov system with
positive cosmological constant}, 
Class. Quantum Grav. 20 (2003), 3037--3049.

\bibitem{tchapnda} S.B. Tchapnda, 
{\em The plane-symmetric Einstein-dust system with positive cosmological constant}, 
J. Hyperbolic Differ. Equ. 5 (2008), 681--692.

\bibitem{TNR} D. Tegankong, N. Noutchegueme, and A.D. Rendall,
{\em Local existence and continuation criteria for solutions of the Einstein-Vlasov-scalar field system with surface symmetry}, J. Hyperbolic Differ. Equ. 1 (2004), 691--724.

\end{thebibliography}
\end{document}